\title{Prediction of functional ARMA processes with an application to traffic data}
\author{Johannes Klepsch\thanks{Center for Mathematical Sciences, 
Technical University of Munich,  85748 Garching,\newline
Boltzmannstrasse~3, Germany, 
e-mail: j.klepsch@tum.de\,,\,cklu@tum.de\,,\,taoran.wei@tum.de, \newline
http://www.statistics.tum.de}
\and 
Claudia Kl\"uppelberg\footnotemark[1]
\and
Taoran Wei\footnotemark[1]
}
\numberwithin{equation}{section}
\newtheorem{theorem}{Theorem}[section]
\newtheorem{lemma}[theorem]{Lemma}
\newtheorem{remark}[theorem]{Remark}
\newtheorem{example}[theorem]{Example}
\newtheorem{proposition}[theorem]{Proposition}
\newtheorem{definition}[theorem]{Definition}
\newtheorem{assumption}[theorem]{Assumption}
\newtheorem{corollary}[theorem]{Corollary}
\newtheorem{fig}[theorem]{Figure}
\newcommand{\bthe}{\begin{theorem}}
\newcommand{\ethe}{\end{theorem}}
\newcommand{\ben}{\begin{enumerate}}
\newcommand{\een}{\end{enumerate}}
\newcommand{\bit}{\begin{itemize}}
\newcommand{\eit}{\end{itemize}}
\newcommand{\beq}{\begin{equation}}
\newcommand{\eeq}{\end{equation}}
\newcommand{\ble}{\begin{lemma}}
\newcommand{\ele}{\end{lemma}}
\newcommand{\bde}{\begin{definition}\rm}
\newcommand{\ede}{\halmos\end{definition}}
\newcommand{\bco}{\begin{corollary}}
\newcommand{\eco}{\end{corollary}}
\newcommand{\bpr}{\begin{proposition}}
\newcommand{\epr}{\end{proposition}}
\newcommand{\brem}{\begin{remark}\rm}
\newcommand{\erem}{\halmos\end{remark}}
\newcommand{\bproof}{\begin{proof}}
\newcommand{\eproof}{\end{proof}}
\newcommand{\bexam}{\begin{example}\rm}
\newcommand{\eexam}{\halmos\end{example}}
\newcommand{\bfi}{\begin{fig}}
\newcommand{\efi}{\end{fig}}
\newcommand{\btab}{\begin{tab}}
\newcommand{\etab}{\end{tab}}
\newcommand{\beao}{\begin{eqnarray*}}
\newcommand{\eeao}{\end{eqnarray*}\noindent}
\newcommand{\beam}{\begin{eqnarray}}
\newcommand{\eeam}{\end{eqnarray}\noindent}
\newcommand{\barr}{\begin{array}}
\newcommand{\earr}{\end{array}}
\newcommand{\bdis}{\begin{displaymath}}
\newcommand{\edis}{\end{displaymath}\noindent}
\def\N{{\mathbb N}}
\def\Z{{\mathbb Z}}
\def\R{{\mathbb R}}
\def\cals_+{{\cals_+}}
\def\call{{\mathcal{L}}}
\def\cals{{\mathcal{S}}}
\def\sp{{\overline{{{\rm sp}}}}}
\newcommand{\la}{{\lambda}}
\newcommand{\eps}{\varepsilon}
\newcommand{\ARMA}{{\mbox{\rm ARMA}}}
\newcommand{\AR}{{\mbox{\rm AR}}}
\newcommand{\MA}{{\mbox{\rm MA}}}
\newcommand{\CPV}{{\mbox{\rm CPV}}}
\newcommand{\wh}{\widehat}
\newcommand{\wt}{\widetilde}
\newcommand{\halmos}{\quad\hfill\mbox{$\Box$}}  
\newcommand{\bfx}{\mathbf{X}}
\begin{document}


\maketitle

\begin{abstract}
{This work is devoted to functional ARMA$(p,q)$ processes and approximating vector models based on functional PCA in the context of prediction. 
After deriving sufficient conditions for the existence of a stationary solution to both the functional and the vector model equations, the structure of the approximating vector model is investigated. 
The stationary vector process is used to predict the functional process. A bound for the difference between vector and functional best linear predictor is derived. 
The paper concludes by applying functional ARMA processes for the modeling and prediction of highway traffic data.}
\end{abstract}

\noindent
{\em AMS 2010 Subject Classifications:}  primary:\,\,\,62M10, 62M20\,\,\,
secondary: \,\,\,60G25\\
\noindent
{\em Keywords:}
functional ARMA process, functional principal component analysis (FPCA), functional time series analysis (FTSA), functional prediction, traffic data analysis

\section{Introduction}\label{s1} 

A {\em macroscopic highway traffic model} involves velocity, flow (number of vehicles passing a reference point per unit of time), and density (number of vehicles on a given
road segment). 
The relation among these three variables is depicted in diagrams of ``velocity-flow relation'' and ``flow-density relation''.
The diagram of ``flow-density relation'' is also called {\em fundamental diagram of traffic flow} and can be used to determine the capacity of a road system and give guidance for inflow regulations or speed limits.
Figures~\ref{ch6_flowspeed} and~\ref{ch6_flowdensity} depict these quantities for traffic data provided by the Autobahndirektion S\"udbayern. 
At a {critical traffic density} (65~veh/km) the state of flow will change from stable to unstable. 

\begin{figure}[t]
\begin{center}
\includegraphics[trim=0.4cm 0.4cm 0.4cm 0.9cm, clip=true,scale=0.72]{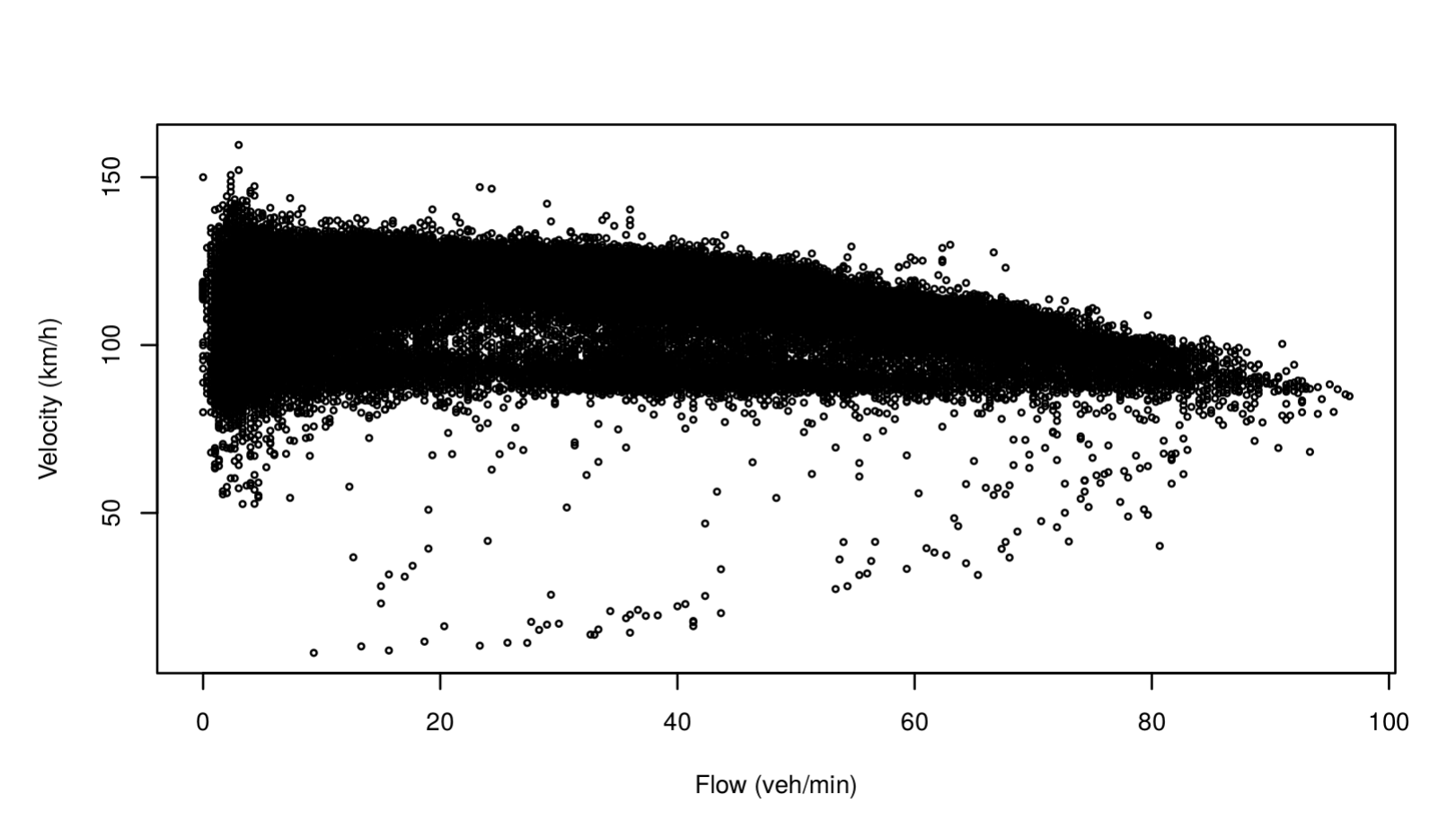}
  \caption{Velocity-flow relation on highway A92 in Southern Bavaria. 
  Depicted are average velocities per 3 min versus number of vehicles within these 3 min during the period 01/01/2014 0:00 to 30/06/2014 23:59.}
  \label{ch6_flowspeed}
  \end{center}
\end{figure}

\begin{figure}[t]
\begin{center}
\includegraphics[trim=0.4cm 0cm 0.4cm 0.7cm, clip=true,scale=0.72]{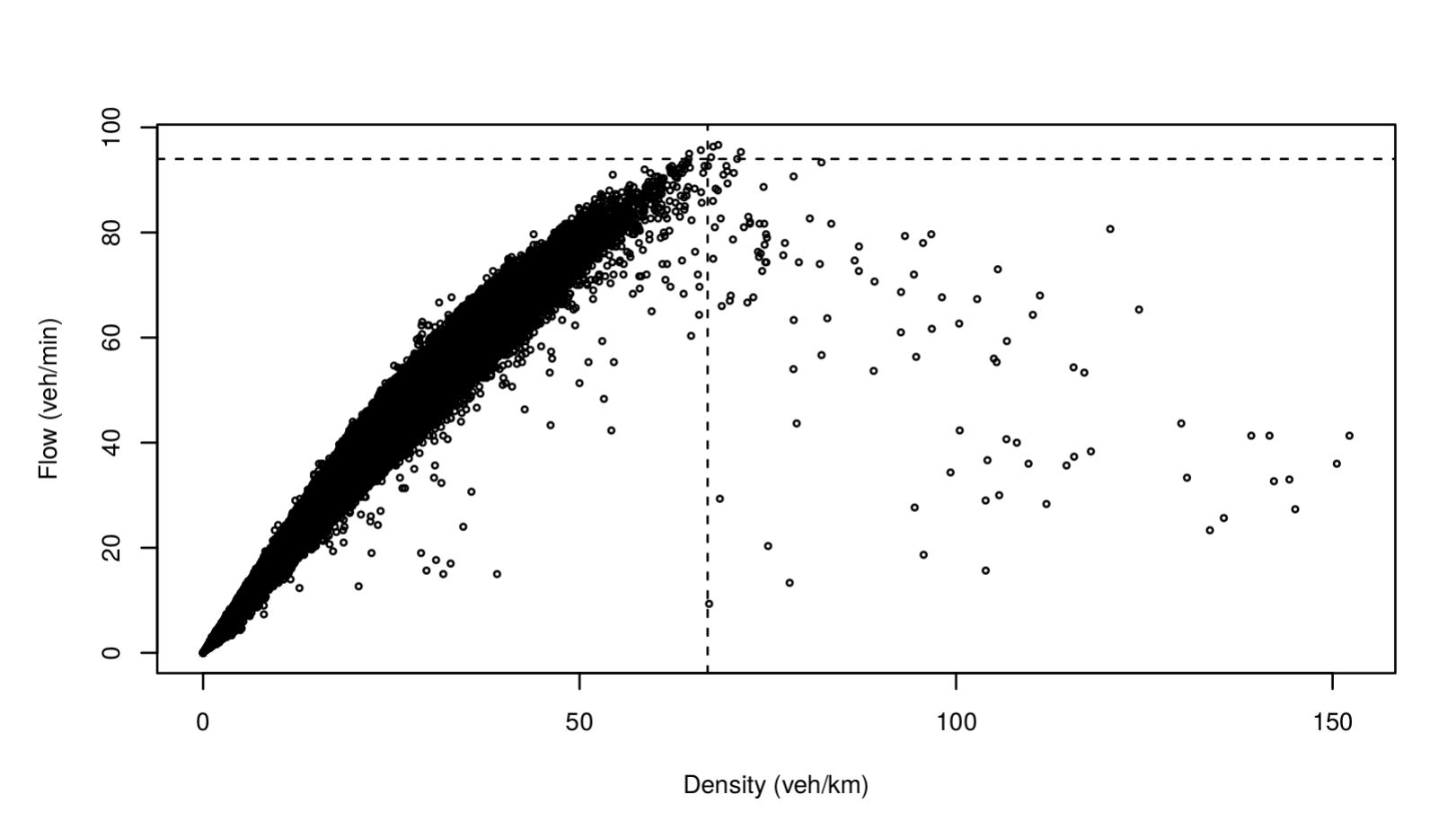}
  \caption{ Flow-density relation for the data from Figure~\ref{ch6_flowspeed} with critical traffic density of 65~veh/km.}
  \label{ch6_flowdensity}
  \end{center}
\end{figure}

In this paper we develop a {\em statistical highway traffic model} and apply it to the above data.
As can be seen from Figures~\ref{weekday_mean} and \ref{Meansubstraction_indiv} the data show a certain pattern over the day, which we want to capture utilising tools from functional data analysis. 
Functional data analysis is applied to represent the very high-dimensional traffic velocity data over the day by a random function $X(\cdot)$. 
This is a standard procedure, and we refer to  \citet{ramsay1} for details.

Given the functional data, we want to assess temporal dependence between different days; i.e., our goal is a realistic time series model for functional data, which captures the day to day dependence. 
Our analysis can support short term traffic regulation realised in real-time by electronic devices during the day, which may benefit from a more precise and parsimonious day-to-day prediction.

From a statistical point of view we are interested in the prediction of a functional ARMA$(p,q)$ process for arbitrary orders $p$ and $q$.
In scalar and multivariate time series analysis there exist several prediction methods, which can be easily implemented like the Durbin-Levinson and the Innovations Algorithm (see e.g \citet{brockwell}).
For functional time series, \citet{bosq} has proposed the \emph{functional best linear predictor} for a general linear process. 
However, implementation of the predictor is in general not feasible, because explicit formulas of the predictor can not be derived.  
{The class of functional  AR$(p)$ processes is an exception, where explicit prediction formulas have been given  (e.g. \citet{bosq}, Chapter~3, and \citet{kargin}).}
The functional AR$(1)$ model has also been applied to the prediction of traffic data in \citet{bessecardot}.

In \citet{aue} a prediction algorithm is proposed, which combines the idea of functional principal component analysis (FPCA) and functional time series analysis.
The basic idea is to reduce the infinite-dimensional functional data by FPCA to vector data. 
Thus, the task of predicting a functional time series is transformed to the prediction of a multivariate time series. 
In \cite{aue} this algorithm is used to predict 
the functional AR$(1)$ process. 

In this paper we focus on functional ARMA$(p,q)$ processes. 
{We start by providing sufficient conditions for the existence of a stationary solution to functional ARMA$(p,q)$ models. 
Then we obtain a vector process by projecting the functional process on the linear span of the $d$ most important eigenfunctions of the covariance operator of the process.  
We derive conditions such that the projected process follows a vector ARMA$(p,q)$. 
If these conditions do not hold, we show that the projected process can at least be approximated by a vector ARMA$(p,q)$ process, and we assess the quality of the approximation.
We present conditions such that the vector model equation has a unique stationary solution.}
This leads to prediction methods for functional ARMA$(p,q)$ processes.
An extension of the prediction algorithm of \citet{aue} can be applied, and makes sense under stationarity of both the functional and the vector ARMA$(p,q)$ process. 
We derive bounds for the difference between vector and functional best linear predictor. 

An extended simulation study can be found in \citet{taoran}, Chapter~5, and confirms that approximating the projection of a functional ARMA process by a vector ARMA process of the same order works reasonably well.

Our paper is organised as follows. In Section~\ref{s2} we introduce the necessary Hilbert space theory and notation, which we use throughout. We present the Karhunen-Lo\`eve Theorem and describe the FPCA based on the functional covariance operator. 
In Section~\ref{s4} we turn to functional time series models with special emphasis on functional ARMA$(p,q)$ processes. 
Section~\ref{s42} is devoted to stationarity conditions for the functional ARMA$(p,q)$ model.
In Section~\ref{s43} we study the vector process obtained by projection of the functional process on the linear span of the $d$ most important eigenfunctions of the covariance operator. 
We investigate its stationarity and prove that a vector ARMA process approximates the functional ARMA process in a natural way.
Section~\ref{s5} investigates the prediction algorithm for functional ARMA$(p,q)$ processes invoking the vector process, and compares it to the functional best linear predictor.
Finally, in Section~\ref{s6} we apply  our results to traffic data of velocity measurements. 

\section{Methodology}\label{s2}

We summarize some concepts which we shall use throughout. 
For details and more background we refer to the monographs \citet{bosq}, \citet{horvath} and  \citet{hsing}.
Let $H=L^2\left(\left[0,1\right]\right)$ be the real separable Hilbert space 
of square integrable functions $x: \left[0,1\right]\to\R$ with norm $\|x\|=(\int_0^1 x^2(s) ds)^{1/2}$ generated by  the inner product
\begin{equation*}
\left\langle x,y\right\rangle:=\int_0^1 x(t)y(t)dt,\quad x,y\in L^2\left(\left[0,1\right]\right). 
\end{equation*}
 We shall often use Parseval's equality, which ensures that for an orthonormal basis (ONB) $(e_i)_{i\in\N}$
\begin{align} \label{parseval}
\langle x,y\rangle = \sum_{i=1}^\infty \langle x,e_i\rangle \langle e_i, y\rangle,\quad x,y\in H.
\end{align}
We denote by $\mathcal{L}$ the space of bounded linear operators acting on $H$. 
If not stated differently, we take the standard operator norm defined for a
bounded operator $\Psi\in\mathcal{L}$ by $\|\Psi\|_{\mathcal{L}}:=\sup_{\|x\|\leq 1}\|\Psi x\|$.
 
A bounded linear operator $\Psi$ is a \emph{Hilbert-Schmidt} operator if it is compact and for every ONB $(e_i)_{i\in\N}$ of $H$
\begin{displaymath}
\sum_{i=1}^{\infty}\|\Psi e_i\|^2<\infty.
\end{displaymath}
We denote by $\mathcal{S}$ the space of Hilbert-Schmidt operators acting on $H$, which is again a separable Hilbert space equipped with the following inner product and corresponding \emph{Hilbert-Schmidt norm}:
\begin{align*}
&\left\langle \Psi_1,\Psi_2\right\rangle_\mathcal{S}:=\sum\limits_{i=1}^\infty\left\langle \Psi_1 e_i,\Psi_2 e_i\right\rangle\quad\mbox{and}\quad
\|\Psi\|_\mathcal{S}:=\sqrt{\left\langle\Psi,\Psi\right\rangle_\mathcal{S}}=\sqrt{\sum_{i=1}^{\infty}\|\Psi  e_i\|^2}<\infty.
\end{align*}
If $\Psi$ is a Hilbert-Schmidt operator, then 
\begin{align*}
\|\Psi\|_{\mathcal{L}}\leq\|\Psi\|_{\mathcal{S}}.
\end{align*}

Let $\mathcal{B}_H$ be the Borel $\sigma$-algebra of subsets of $H$. 
All random functions are defined on some probability space $\left(\Omega,\mathcal{A},P\right)$
and are $\mathcal{A}-\mathcal{B}_H$-measurable.
Then the space of square integrable  random functions $L^2_H:=L^2_H(\Omega,\mathcal{A},P)$ is a Hilbert space with inner product $E\left\langle X,Y\right\rangle = E \int_0^1 X(s) Y(s) ds$ for $X,Y\in L^2_H$.
We call such $X$ an \emph{$H$-valued random function}. 
For $X\in L_H^2$ there is a unique function $\mu \in H$, the  \emph{functional mean} of $X$, such that $E\langle y,X\rangle = \langle y,\mu \rangle $ for $y\in H$, satisfying 
\begin{equation*}
\mu(t)=E[X(t)],\quad t\in[0,1].
\end{equation*}
We assume throughout that $\mu = 0$, since under weak assumptions on $X$ the functional mean can be estimated consistently  from the data (see Remark~\ref{consistent}).

\bde
The \emph{covariance operator} $C_X$ of $X$ acts on $H$ and is defined as
\begin{align}\label{ch2_covoperatorform}
C _X:  x\mapsto E\left[\langle X,x\rangle X \right],\quad x\in H.
\end{align}
More precisely,
\begin{align*}
(C_X x)(t) &=E\left[\int_{0}^{1} X(s) x(s)\mathrm{d}s\, X(t)\right]
=\int_{0}^{1}E\left[X(t) X(s)\right] x(s) ds .
\end{align*}
\ede

$C_X$ is a symmetric, non-negative definite Hilbert-Schmidt operator with spectral representation
$$C_X x = \sum_{j=1}^\infty \la_j\langle x,\nu_j\rangle \nu_j,\quad x\in H,$$
for eigenpairs $(\la_j, \nu_j)_{j\in\N}$, where $(\nu_j)_{j\in\N}$ is an ONB of $H$ and $(\la_j)_{j\in\N}$ is a sequence of positive real numbers such that $\sum_{j=1}^\infty\la_j<\infty$.
When considering spectral representations we assume that the $\la_j$ are decreasingly ordered; i.e., $\la_i\ge \la_k$ for $i<k$.
Every $X\in L_H^2$ can be represented as a linear combination of the eigenfunctions $(\nu_i)_{i\in\N}$. This is known as the \emph{Karhunen-Lo\`eve representation}.

\bthe[Karhunen-Lo\`eve Theorem]\label{theorem2.3}
For $X\in L^2_H$ with $EX=0$ 
\begin{equation}\label{ch2_karhunen}
X=\sum\limits_{i=1}^{\infty}\left\langle X,\nu_i\right\rangle\nu_i,
\end{equation}
where $(\nu_i)_{i\in\N}$ are the  eigenfunctions of the covariance operator $C_X$.
The scalar products $\langle X, \nu_i\rangle$ have mean-zero, variance $\lambda_i$ and are uncorrelated; i.e., for all $i,j\in\N$,  $i\neq j$,
\begin{align}\label{ch2_lamdaandvariance}
E\left\langle X,\nu_i\right\rangle=0,\quad
E[\left\langle X,\nu_i\right\rangle \left\langle X,\nu_j\right\rangle]=0,\quad\mbox{and}\quad
E\left\langle X,\nu_i\right\rangle^2=\lambda_i,
\end{align}
where $(\lambda_i)_{i\in\N}$ are the eigenvalues of $C_X$.
\ethe

The scalar products $(\langle X, \nu_i\rangle)_{i\in\N}$ defined in (\ref{ch2_karhunen}) are called the \emph{scores} of $X$. 
By the last equation in (\ref{ch2_lamdaandvariance}), we have
\begin{equation}\label{ch2_sumoflamda}
\sum_{j=1}^{\infty}\lambda_j=\sum_{j=1}^{\infty}E\left\langle X,\nu_j\right\rangle^2=E\|X\|^2<\infty,\quad X\in L^2_H.
\end{equation}
Combining (\ref{ch2_lamdaandvariance}) and (\ref{ch2_sumoflamda}),  every $\lambda_j$ represents some proportion of the total  variability of $X$. 

\brem[The CVP method]\label{CVP}
For $d\in\N$ consider the largest $d$ eigenvalues $\lambda_1, \ldots, \lambda_d$  of $C_X$. 
The  \emph{cumulative percentage of total variance} \CPV$(d)$ is defined as
\begin{equation*}
\CPV(d):=\sum\limits_{j=1}^{d}\lambda_j\,  \big/ \, \sum\limits_{j=1}^{\infty}\lambda_j.
\end{equation*}
If we choose $d\in\N$ such that the $\CPV(d)$
exceeds a predetermined high percentage value, then  $\lambda_1,\dots,\lambda_d$ 
 explain most of the variability of $X$. 
In this context $\nu_1,\dots,\nu_d$ are called the \emph{functional principal components} (FPCs). 
\erem
\section{Functional  ARMA processes}\label{s4}

In this section we introduce the functional ARMA$(p,q)$ equations and derive sufficient conditions for the equations to have a stationary and causal solution, which we present explicitly.
We then project the functional linear process on a finite dimensional subspace of $H$. We approximate this finite dimensional process by a suitable vector ARMA process, and give conditions for the stationarity of this vector process. We also give conditions on the functional ARMA model such that the projection of the functional process on a finite dimensional space exactly follows a vector ARMA structure.

We start by defining functional white noise.

\bde[\citet{bosq}, Definition 3.1]\label{def_wn}\\
Let $(\eps_n)_{n\in\Z}$ be a sequence of $H$-valued random functions.\\
(i) \, $(\eps_n)_{n\in\Z}$  is \emph{$H$-white noise} (WN) if
for all $n\in\mathbb{Z}$, $E[\varepsilon_n]=0$, $0<E\|\varepsilon_n\|^2=\sigma^2_\varepsilon<\infty$, $C_{\varepsilon_n}=C_{\varepsilon}$, and if $C_{\varepsilon_n,\varepsilon_m}(\cdot) := E[ \left\langle \varepsilon_m,\cdot \right\rangle  \varepsilon_n ]=0$ for all $n\neq m$.\\
(ii) $(\eps_n)_{n\in\Z}$ is  \emph{$H$-strong white noise} (SWN),
if for all $n\in\mathbb{Z}$, $E[\varepsilon_n]=0$, $0<E\|\varepsilon_n\|^2=\sigma^2_\varepsilon<\infty$ and $(\eps_n)_{n\in\Z}$ is i.i.d.
\ede

We assume throughout that $(\eps_n)_{n\in\Z}$ is WN  with zero mean and $E\|\eps_n\|^2=\sigma^2_\eps<\infty$. When SWN is required, this will be specified.

\subsection{Stationary functional \ARMA\ processes}\label{s42}

Formally we can define a functional ARMA process of arbitrary order.

\bde\label{armadef}
Let $(\eps_n)_{n\in\Z}$ be WN as in Definition~\ref{def_wn}(i). 
Let furthermore $\phi_1,\dots,\phi_p$, $\theta_1,\ldots,\theta_q\in\mathcal{L}$. 
Then a solution of
\begin{equation}\label{ch4_FARMA}
X_n=\sum_{i=1}^{p}\phi_{i}X_{n-i}+\sum_{j=1}^q\theta_j\varepsilon_{n-j} + \eps_{n},\quad n\in\Z,
\end{equation} 
is called a {\em functional} ARMA$(p,q)$ {\em process}.
\ede

We  derive conditions such that \eqref{ch4_FARMA} has a stationary solution. 
We begin with the functional ARMA$(1,q)$ process and need the following assumption.
\begin{assumption}\label{farma}
There exists some $j_0\in\N$ such that $\|\phi_1^{j_0}\|_{\mathcal{L}}<1$.
\end{assumption}

\bthe\label{theorem4.6}
Let $(X_n)_{n\in\Z}$ be as in Definition~\ref{armadef} with $p=1$ and set $\phi_1=:\phi$.
If Assumption~\ref{farma} holds, there exists a unique stationary and
causal solution to (\ref{ch4_FARMA}) given by
\begin{align}
X_n &= \eps_n + (\phi + \theta_1)\eps_{n-1} + (\phi^2+\phi\theta_1+\theta_2)\eps_{n-2}\nonumber \\
& \quad +\cdots + (\phi^{q-1}+\phi^{q-2}\theta_1+\cdots + \theta_{q-1})\eps_{n-(q-1)} \notag\\
& \quad + \sum_{j=q}^\infty \phi^{j-q}(\phi^q+\phi^{q-1}\theta_1+\cdots + \theta_{q})\eps_{n-j}\nonumber\\
& = \; \sum_{j=0}^{q-1} (\sum_{k=0}^j \phi^{j-k} \theta_k ) \eps_{n-j} + \sum_{j=q}^\infty \phi^{j-q} ( \sum_{k=0}^q \phi^{q-k}\theta_k ) \eps_{n-j}\label{solution},
\end{align}
where $\phi^0=I$ denotes the identity operator in $H$. 
Furthermore, the series in (\ref{solution}) converges in $L^2_H$ and with probability one. 
\ethe

 For the proof we need the following lemma.
 
\ble[\citet{bosq}, Lemma 3.1]\label{lemma4.4}
For every $\phi\in\mathcal{L}$ the following are equivalent: \\
(i) \, There exists some $j_0\in\N$ such that $\|\phi^{j_0}\|_{\mathcal{L}}<1$.\\
(ii) \, There exist $a > 0$ and $0 < b < 1$ such that $\|\phi^j\|_{\mathcal{L}}<ab^j$ for every $j\in\N$.
\ele

\noindent
{\em Proof of Theorem~\ref{theorem4.6}.  }
We follow the lines of the proof of Proposition~3.1.1 of \citet{brockwell} and Theorem~3.1 in \cite{bosq}. First we prove $L^2_H$-convergence of the series (\ref{solution}). 
  Take $m\ge q$ and consider the truncated series 
 \begin{align}
 X_n^{(m)} &:= \eps_n + (\phi + \theta_1)\eps_{n-1} + (\phi^2+\phi\theta_1+\theta_2)\eps_{n-2}\nonumber \\
& \quad +\cdots + (\phi^{q-1}+\phi^{q-2}\theta_1+\cdots + \theta_{q-1})\eps_{n-(q-1)} \notag\\
& \quad+ \sum_{j=q}^m \phi^{j-q}(\phi^q+\phi^{q-1}\theta_1+\cdots + \theta_{q})\eps_{n-j}. \label{truncatedarma}
 \end{align}
 Define 
$$\beta(\phi,\theta) := \phi^q+\phi^{q-1}\theta_1+\cdots +\phi\theta_{q-1}+ \theta_{q}\in\call.$$
Since $(\varepsilon_n)_{n\in\Z}$ is WN, for all $m'>m\geq q$, 
\begin{align*}
E\|X^{(m')}_n&-X^{(m)}_n\|^2
=E\Big\|\sum_{j=m}^{m'}\phi^{j-q}\beta(\phi,\theta)\varepsilon_{n-j}\Big\|^2\\
&=\sum_{j=m}^{m'}E\left\|\phi^{j-q}\beta(\phi,\theta)\varepsilon_{n-j}\right\|^2\\
&\leq \sigma^2_\varepsilon \sum_{j=m}^{m'}\left\|\phi^{j-q}\right\|^2_\mathcal{L} \|\beta(\phi,\theta)\|^2_\mathcal{L}.
\end{align*}
 Lemma~\ref{lemma4.4} applies giving
\begin{equation}\label{ch4_operatorbound}
\sum_{j=0}^{\infty}\|\phi^j\|^2_{\mathcal{L}}<\sum_{j=0}^{\infty}a^2b^{2j}=\frac{a^2}{1-b^2}<\infty.
\end{equation}
Thus,
\begin{equation*}
\sum_{j=m}^{m'}\left\|\phi^{j-q}\right\|^2_\mathcal{L} \|\beta(\phi,\theta)\|^2_\mathcal{L}
\leq \|\beta(\phi,\theta)\|^2_\mathcal{L} \, a^2 \sum_{j=m}^{m'}b^{2(j-q)}\to 0,\quad\text{as}\ m,m'\to\infty.
\end{equation*}
By the Cauchy criterion the series in (\ref{solution}) converges in $L^2_H$.\\[2mm]
To prove convergence with probability one we investigate the following second moment, using that $(\varepsilon_n)_{n\in\Z}$ is WN:
\begin{align*}
E\Big(\sum_{j=1}^{\infty}\big\Vert\phi^{j-q}\beta(\phi,\theta)&\varepsilon_{n-j}\big\Vert\Big)^2
\leq E\Big(\sum_{j=1}^{\infty}\Vert\phi^{j-q}\Vert_{\call} \Vert\beta(\phi,\theta)\Vert_{\call} \Vert\varepsilon_{n-j}\Vert\Big)^2\\
&\leq \sigma^2_\varepsilon\Vert\beta(\phi,\theta)\Vert_{\call}^2  \big(\sum_{j=1}^{\infty}\Vert\phi^{j-q}\Vert_\mathcal{L}\big)^2 .
\end{align*}
Finiteness follows, since by $(\ref{ch4_operatorbound})$, 
\begin{align*}
\left\|\beta(\phi,\theta)\right\|^2_\mathcal{L}
\Big(\sum_{j=1}^{\infty} \|\phi^{j-q}\|^2_\mathcal{L}\Big)^2
& <\left\|\beta(\phi,\theta)\right\|^2_\mathcal{L} \Big(\sum_{j=1}^{\infty}ab^{j-q}\Big)^2
=\left\|\beta(\phi,\theta)\right\|^2_\mathcal{L}\frac{a^2}{(1-b)^2}<\infty.
\end{align*}
Thus, the series \eqref{solution} converges with probability one.

Note that the solution  (\ref{solution}) is stationary, since its second order structure only depends on $(\varepsilon_n)_{n\in\Z}$, which is as WN shift invariant.\\[2mm]
In order to prove that \eqref{solution} is a solution of (\ref{ch4_FARMA}) with $p=1$, we plug (\ref{solution}) into (\ref{ch4_FARMA}), and obtain for $n\in\Z$,
\begin{align}
X_n-\phi X_{n-1} &=\; \sum_{j=0}^{q-1} (\sum_{k=0}^j \phi^{j-k} \theta_k ) \eps_{n-j} + \sum_{j=q}^\infty \phi^{j-q} ( \sum_{k=0}^q \phi^{q-k}\theta_k ) \eps_{n-j} \notag \\
 & \quad - \phi\Big(\; \sum_{j=0}^{q-1} (\sum_{k=0}^j \phi^{j-k} \theta_k ) \eps_{n-1-j} + \sum_{j=q}^\infty \phi^{j-q} ( \sum_{k=0}^q \phi^{q-k}\theta_k ) \eps_{n-1-j}\Big) \label{plugin}.
\end{align}
The third term of the right-hand side 
 can be written as
\beao
&&\sum_{j=0}^{q-1} (\sum_{k=0}^j \phi^{j+1-k} \theta_k ) \eps_{n-1-j} + \sum_{j=q}^\infty \phi^{j+1-q} ( \sum_{k=0}^q \phi^{q-k}\theta_k ) \eps_{n-1-j} \\
 &=& \sum_{j'=1}^{q} (\sum_{k=0}^{j'-1} \phi^{j'-k} \theta_k ) \eps_{n-j'} + \sum_{j'=q+1}^\infty \phi^{j'-q} ( \sum_{k=0}^q \phi^{q-k}\theta_k ) \eps_{n-j'}\\
 &=& \sum_{j'=1}^{q} (\sum_{k=0}^{j'} \phi^{j'-k} \theta_k - \phi^{j'-j'}\theta_{j'} ) \eps_{n-j'} + \sum_{j'=q+1}^\infty \phi^{j'-q} ( \sum_{k=0}^q \phi^{q-k}\theta_k ) \eps_{n-j'}\\
 &=& \sum_{j'=1}^{q} (\sum_{k=0}^{j'} \phi^{j'-k} \theta_k ) \eps_{n-j'} + \sum_{j'=q+1}^\infty \phi^{j'-q} ( \sum_{k=0}^q \phi^{q-k}\theta_k ) \eps_{n-j'} - \sum_{j'=1}^q \theta_{j'} \eps_{n-j'} .
 \eeao
Comparing the sums in \eqref{plugin}, the only remaining terms are
\begin{align*}
X_n-\phi X_{n-1} & = \eps_n - \sum_{k=0}^{q} \phi^{q-k} \theta_k  \eps_{n-q} + \sum_{j'=1}^q \theta_{j'} \eps_{n-j'} + \sum_{k=0}^{q} \phi^{q-k} \theta_k  \eps_{n-q}\\
 &= \eps_n + \sum_{j'=1}^q \theta_{j'} \eps_{n-j'},\quad n\in\Z,
\end{align*}
which shows that (\ref{solution}) is a solution of equation (\ref{ch4_FARMA}) with $p=1$.\\[2mm]
 Finally, we prove uniqueness of the solution. 
 Assume that there is another stationary solution $X'_n$ of \eqref{ch4_FARMA}. Iteration gives (cf. \cite{spangenberg}, eq. (4)) for all $r>q$,
\begin{align*}
X'_n &= \sum_{j=0}^{q-1} (\sum_{k=0}^j \phi^{j-k}\theta_k) \eps_{n-j} 
+ \sum_{j=q}^{r-1} \phi^{j-q} (\sum_{k=0}^q \phi^{q-k} \theta_k)\eps_{n-j}\\
& \quad\quad + \sum_{j=0}^{q-1} \phi^{r+j-q}  (\sum_{k=j+1}^q \phi^{q-k} \theta_k) \eps_{n-(r+j)} + \phi^r X'_{n-r} .
\end{align*}
Therefore, with $X^{(r)}$ as in \eqref{truncatedarma}, for $r>q$,
\begin{align*}
E\big\|X'_n- & X^{(r)}_n\big\|^2 =E\Big\|
 \sum_{j=0}^{q-1} \phi^{r+j-q}  \, (\sum_{k=j+1}^q \phi^{q-k} \theta_k) \eps_{n-(r+j)} + \phi^r X'_{n-r}  \Big\|^2\\
&\leq 2 E\Big\| \sum_{j=0}^{q-1} \phi^{r+j-q}  \, (\sum_{k=j+1}^q \phi^{q-k} \theta_k) \eps_{n-(r+j)}   \Big\|^2
+2 \ E\left\|\phi^{r}X'_{n-r}\right\|^2\\
&\leq 2 \Vert \phi^{r-q} \Vert_{\call} ^2  E\Big\| \sum_{j=0}^{q-1} \phi^{j}  \, (\sum_{k=j+1}^q \phi^{q-k} \theta_k) \eps_{n-(r+j)}   \Big\|^2
+2 \Vert\phi^r\Vert_{\call}^2 E\left\|X'_{n-r}\right\|^2.
\end{align*}
Since both $(\varepsilon_n)_{n\in\Z}$ and  $(X'_n)_{n\in\Z}$ are stationary, Lemma~\ref{lemma4.4} yields
\begin{align*}
E\big\|X'_n- X^{(r)}_n\big\|^2 \rightarrow 0,\quad r\rightarrow \infty.
\end{align*}
Thus $X'_n$ is in $L^2_H$ equal to the limit $X_n$ of $X_n^{(r)}$, which proves  uniqueness. 
\halmos

\brem
In \citet{spangenberg}  a strictly stationary, not necessarily causal solution of a functional ARMA$(p,q)$ equation in Banach spaces is derived under minimal conditions. This extends known results considerably.
\erem

For a functional ARMA$(p,q)$ process we use the state space representation 
\begin{equation}\label{ch4_statespaceequation}
\underbrace{\begin{pmatrix}
  X_n \\
  X_{n-1}\\
  \vdots\\
  X_{n-p+1}
 \end{pmatrix}}_{\mathlarger {Y_n}}=
 \underbrace{\begin{pmatrix}
 \phi_1 &   \cdots & \phi_{p-1} & \phi_p\\
   I    &       &       &0\\
       &  \ddots &   &  \vdots     \\
      &  & I & 0
  \end{pmatrix}}_{\mathlarger {\wt{\phi}}}
 \underbrace{\begin{pmatrix}
    X_{n-1} \\
    X_{n-2}\\
    \vdots\\
    X_{n-p}
   \end{pmatrix}}_{\mathlarger Y_{n-1}}
   +\sum_{j=0}^q 
    \underbrace{\begin{pmatrix}
 \theta_j &   0 & \hdots & 0\\
   0    &   0    &       &\vdots\\
    \vdots   &   &\ddots   &    \\
     0 &  &  & 0
  \end{pmatrix}}_{\mathlarger {\wt\theta_j}}
  \underbrace{\begin{pmatrix}
       \varepsilon_{n-j} \\
       0\\
       \vdots\\
       0
      \end{pmatrix},}_{\mathlarger {\delta_{n-j}}}
\end{equation} 
where $\theta_0=I$, and $I$ and $0$ denote the identity and zero operators on $H$, respectively. 
We summarize this as
\begin{equation}\label{statespace}
Y_n=\wt{\phi}\,Y_{n-1}+\sum_{j=0}^q\wt{\theta}_j \delta_{n-j},\quad n\in\Z.
\end{equation}
Since $X_n$ and $\eps_n$ take values in $H$, 
$Y_n$ and $\delta_n$ take values in the product Hilbert space {$H^p:=(L^2([0,1]))^p$ with inner product and norm given by
\begin{equation}\label{4.13}
\left\langle x,y\right\rangle_p:=\sum_{j=1}^{p}\left\langle x_j,y_j \right\rangle
\quad\mbox{and}\quad
\|x\|_p:=\sqrt{\langle x,x\rangle_p}.
\end{equation}
We denote by $\call(H^p)$ the space of bounded linear operators acting on $H^p$. 
The operator norm of $\wt{\phi}\in\call(H^p)$ is defined as usual by
$\|\wt{\phi}\|_{\mathcal{L}}:=\sup_{\|x\|_p\leq 1} \|\wt{\phi}\,x\|_p$.
The random vector $(\delta_n)_{n\in\mathbb{Z}}$ is WN in $H^p$.}
 Let $P_1$ be the projection of $H^p$ on the first component; i.e.,
\begin{align*}
P_1 x=x_1,\quad x=(x_1,\ldots,x_n)^\top \in H^p. 
\end{align*}
\begin{assumption}\label{farmap}
There exists some $j_0 \in \N$ such that $\wt{\phi}$ as in \eqref{ch4_statespaceequation} satisfies $\|\wt{\phi}^{j_0}\|_{\mathcal{L}}<1$.
\end{assumption}

Since the proof of Theorem~\ref{theorem4.6} holds also in $H^p$, using the state space representation of a functional ARMA$(p,q)$ in $H$ as a functional ARMA$(1,q)$ in $H^p$, we get the following theorem as a consequence of Theorem~\ref{theorem4.6}.

\bthe \label{theo_statespace}
Under Assumption~\ref{farmap} there exists a unique stationary and causal solution to the functional \ARMA$(p,q)$ equations (\ref{ch4_FARMA}).
The solution can be written as $X_n=P_1 Y_n$, where $Y_n$ is the solution to the state space equation \eqref{statespace}, given by
\begin{align*}
Y_n &= \delta_n + (\wt{\phi} + \wt{\theta}_1)\delta_{n-1} + ({\wt{\phi}}^{\, 2}+\wt{\phi}\;\wt{\theta}_1+\wt{\theta}_2)\delta_{n-2}\nonumber \\
& \ +\cdots + (\wt{\phi}^{\, q-1}+\wt{\phi}^{\, q-2}\,\wt{\theta}_1+\cdots + \wt{\theta}_{q-1})\delta_{n-(q-1)}\nonumber \\
& \  + \sum_{j=q}^\infty \wt{\phi}^{\, j-q}(\wt{\phi}^{\, q}+\wt{\phi}^{\, q-1}\,\wt{\theta}_1+\cdots + \wt{\theta}_q)\delta_{n-j},\nonumber\\
& = \; \sum_{j=0}^{q-1} (\sum_{k=0}^j \wt{\phi}^{j-k} \wt{\theta}_k ) \delta_{n-j} + \sum_{j=q}^\infty \wt{\phi}^{j-q} ( \sum_{k=0}^q \wt{\phi}^{q-k}\,\wt{\theta}_k ) \delta_{n-j},
\end{align*}
where $\wt{\phi}^{\, 0}$ denotes the identity operator in $H^p$ and
$Y_n$, $\delta_n$, $\wt{\phi}$ and $\wt{\theta}_1$,\dots,$\wt{\theta}_q$ are defined in \eqref{ch4_statespaceequation}.
Furthermore, the series converges in $L^2_H$ and with probability one.
\ethe

\subsection{The  vector \ARMA$(p,q)$ process}\label{s43}

We project the stationary functional ARMA$(p,q)$ process $(X_n)_{n\in\Z}$ on a finite-dimen\-sional subspace of $H$.
{We fix $d\in\N$ 
and consider the projection of $(X_n)_{n\in\Z}$ on the subspace $\sp\{\nu_1,\dots,\nu_d\}$  spanned by the $d$ most important eigenfunctions of $C_X$} giving
\begin{align}\label{Xtrunc}
X_{n,d}=P_{\sp\{\nu_1,\dots,\nu_d\}}X_n=\sum_{i=1}^d \langle X_n, \nu_i\rangle \nu_i.
\end{align}

\brem \label{timedep}{
The dimension reduction based on the principal components is optimal for uncorrelated data in terms of its $L^2$-accuracy (cf. \citet{horvath}, Section~3.2). 
We consider time series data, where dimensions corresponding to eigenfunctions $\nu_l$ for $l>d$ can have an impact on subsequent elements of the time series, even if the corresponding eigenvalue $\lambda_l$ is small. Hence FPCA might not be optimal for  functional time series.

In \citet{hoermann} and \citet{panaretros2} an optimal dimension reduction for dependent data is introduced. They propose a  filtering technique based on a frequency domain approach, which reduces the dimension in such a way that the score vectors form a multivariate time series with diagonal lagged covariance  matrices. 
However, as pointed out in \citet{aue}, it is unclear how the technique can be utilized for prediction, since both future and past observations are required. 


In order not to miss information valuable for prediction when reducing the dimension, we include cross validation on the prediction errors to choose the number of FPCs used to represent the data (see Section~5).  
This also allows us to derive explicit bounds for the prediction error in terms of the eigenvalues of $C_X$ (see Section~4).}
\erem 
	
In what follows we are interested in 
\begin{align}\label{4.23}
\mathbf{X}_n&:=\left(\left\langle X_n,\nu_1\right\rangle,\dots,\left\langle X_n,\nu_d\right\rangle \right)^\top.
\end{align}
$\mathbf{X}_n$ is $d$-dimensional and  {isometrically isomorph} to $X_{n,d}$ (e.g. \cite{hsing}, Theorem~2.4.17).

\brem \label{consistent} 
For theoretical considerations of the prediction problem we assume that $C_X$ and its eigenfunctions are known. 
In a statistical data analysis the eigenfunctions have to replaced by their empirical counterparts.
In order to ensure consistency of the estimators we need slightly stronger assumptions on the innovation process $(\varepsilon_n)_{n\in\Z}$  and on the model parameters, similarly as for estimation and prediction in classical time series models (see \citet{brockwell}).

In \citet{weaklydep} it is shown that, under \mbox{$L^4-m$} approximability  (a weak dependence concept for functional processes), empirical estimators of mean and covariance of the functional process are $\sqrt{n}$-consistent. 
Estimated eigenfunctions and eigenvalues inherit $\sqrt{n}$-consistency results from the estimated covariance operator (Theorem~3.2 in \cite{weaklydep}). 
Proposition~2.1 of \cite{weaklydep} states conditions on the parameters of a linear process to ensure that the time series is \mbox{$L^4-m$} approximable, which are satisfied for stationary functional ARMA processes, where the WN has a finite 4-th moment.
\erem

Our next result, which follows from the linearity of the projection operator, concerns the projection of the WN $(\varepsilon_n)_{n\in\Z}$ on $\sp\{\nu_1,\dots,\nu_d\}$.

\ble\label{ch2_whitenoisetheorem}
Let $(e_i)_{i\in\N}$ be an arbitrary ONB of $H$. 
For $d\in\N$ we define the $d$-dimensional vector process  
\begin{equation*}
\mathbf{Z}_n:=(\left\langle \varepsilon_n,e_1\right\rangle,\dots,\left\langle \varepsilon_n,e_d\right\rangle)^\top,\quad n\in\mathbb{Z}.
\end{equation*} 
(i) \,  If $(\eps_n)_{n\in\Z}$ is {\rm WN} as in Definition~\ref{def_wn}(i),
then $(\mathbf{Z}_n)_{n\in\Z}$ is  {\rm WN} in $\R^d$.\\
(ii) \, If $(\eps_n)_{n\in\Z}$ is {\rm SWN} as in Definition~\ref{def_wn}(ii),
then $(\mathbf{Z}_n)_{n\in\Z}$ is {\rm SWN} in $\R^d$.
\ele

As in Section~\ref{s42} we start with the functional ARMA$(1,q)$ process for $q\in\N$ and are interested in the dynamics of $(X_{n,d})_{n\in\Z}$ of \eqref{Xtrunc} for fixed $d\in\N$. 
Using the model equation \eqref{ch4_FARMA} with $p=1$ and $\phi_1=\phi$, we get 
\begin{equation}\label{ch4_expansion}
\left\langle X_{n},\nu_l\right\rangle=
\left\langle\phi X_{n-1},\nu_l\right\rangle + \sum_{j=0}^q\left\langle\theta_j\varepsilon_{n-j},\nu_l\right\rangle,\quad l\in\mathbb{Z}.
\end{equation}
For every $l$ we expand $\left\langle\phi X_{n-1},\nu_l\right\rangle$, using that $(\nu_l)_{l\in\N}$ is an ONB of $H$ as
\begin{align*}
\left\langle\phi X_{n-1},\nu_l\right\rangle
&=\Big\langle\phi\Big(\sum_{l'=1}^{\infty}\left\langle X_{n-1},\nu_{l'}\right\rangle \nu_{l'}\Big),\nu_l\Big\rangle
=\sum_{l'=1}^{\infty}\left\langle\phi\nu_{l'},\nu_l\right\rangle\left\langle X_{n-1},\nu_{l'}\right\rangle,
\end{align*}
and $\left\langle\theta_j\varepsilon_{n-j},\nu_l\right\rangle$ for $j=1,\ldots,q$ as
\begin{align*}
\left\langle\theta_j\varepsilon_{n-j},\nu_l\right\rangle
&=\Big\langle\theta_j\Big(\sum_{l'=1}^{\infty}\left\langle \varepsilon_{n-j},\nu_{l'}\right\rangle \nu_{l'}\Big),\nu_l\Big\rangle
=\sum_{l'=1}^{\infty}\left\langle\theta_j\nu_{l'},\nu_l\right\rangle\left\langle \varepsilon_{n-j},\nu_{l'}\right\rangle.
\end{align*} 
In order to study the $d$-dimensional vector process $(\bfx_n)_{n\in\Z}$, for notational ease, we restrict a precise presentation to the \ARMA$(1,1)$ model.
The presentation of the \ARMA$(1,q)$ model is an obvious extension.

For a matrix representation of $\mathbf{X}_n$ given in \eqref{4.23} consider the notation: 
\begin{align*}
\left(\begin{array}{c|c}
\mathbf{\Phi} & \mathbf{\Phi}^\infty \\
\hline
\vdots\ &\vdots
\end{array}\right) = {\scriptstyle
\left(\begin{array}{ccc|cc}
\left\langle\phi\nu_1,\nu_1\right\rangle  &\dots &\left\langle\phi\nu_d,\nu_1\right\rangle &\left\langle\phi\nu_{d+1},\nu_1\right\rangle & \dots \\
\vdots & \ddots &\vdots &\vdots &\ddots\\
\left\langle\phi\nu_1,\nu_d\right\rangle  &\dots &\left\langle\phi\nu_d,\nu_d\right\rangle &\left\langle\phi\nu_{d+1},\nu_d\right\rangle &\dots\\
\hline
\left\langle\phi\nu_1,\nu_{d+1}\right\rangle  &\dots &\left\langle\nu_d,\nu_{d+1}\right\rangle &\left\langle\phi\nu_{d+1},\nu_{d+1}\right\rangle &\dots\\
\vdots & \ddots &\vdots &\vdots &\ddots
\end{array}
\right).}
\end{align*}
The matrices $\mathbf{\Theta}$ and $\mathbf{\Theta}^\infty$ are defined analogously. For $q=1$, with $\theta_0=I$ and $\theta_1=\theta$, (\ref{ch4_expansion}) is given in matrix form  by
\begin{equation}\label{ch4_simplifiedblockmatrix}
\begin{pmatrix}
\mathbf{X}_{n}\\
\hline
\mathbf{X}_{n}^\infty
\end{pmatrix}
=\left[
\begin{array}{c|c}
\mathbf{\Phi} & \mathbf{\Phi}^\infty \\
\hline
\vdots\ &\vdots
\end{array}
\right]\begin{pmatrix}
\mathbf{X}_{n-1}\\
\hline
\mathbf{X}_{n-1}^\infty
\end{pmatrix}+\begin{pmatrix}
\mathbf{E}_{n}\\
\hline
\mathbf{E}_{n}^\infty
\end{pmatrix}+\left[
\begin{array}{c|c}
\mathbf{\Theta} & \mathbf{\Theta}^\infty \\
\hline
\vdots\ &\vdots
\end{array}
\right]\begin{pmatrix}
\mathbf{E}_{n-1}\\
\hline
\mathbf{E}_{n-1}^\infty
\end{pmatrix},
\end{equation}
where 
\begin{align}
\mathbf{E}_n&:=\left(\left\langle \varepsilon_n,\nu_1\right\rangle,\dots,\left\langle \varepsilon_n,\nu_d\right\rangle \right)^\top,\nonumber\\
\mathbf{X}_n^\infty&:=\left(\left\langle X_n,\nu_{d+1}\right\rangle,\dots\right)^\top, \; \text{and}\nonumber\\
\mathbf{E}_n^\infty&:=\left(\left\langle \varepsilon_n,\nu_{d+1}\right\rangle,\dots\right)^\top.\nonumber
\end{align}
The operators $\mathbf{\Phi}$ and $\mathbf{\Theta}$ in \eqref{ch4_simplifiedblockmatrix} are $d\times d$ matrices with entries
 $\left\langle\phi\nu_{l'},\nu_l\right\rangle$ and $\left\langle\theta\nu_{l'},\nu_l\right\rangle$ in the  $l$-th row and $l'$-th column, respectively. 
 Furthermore,
 $\mathbf{\Phi}^\infty$ and $\mathbf{\Theta}^\infty$ are $d\times\infty$ matrices with $ll'$-th entries $\left\langle\phi\nu_{l'+d},\nu_l\right\rangle$ and $\left\langle\theta\nu_{l'+d},\nu_l\right\rangle$, respectively.
  
By (\ref{ch4_simplifiedblockmatrix}) $(\mathbf{X}_n)_{n\in\Z}$ satisfies the $d$-dimensional vector equation
\begin{align}\label{ch4_vectorprocess}
\mathbf{X}_{n}
&=\mathbf{\Phi}\mathbf{X}_{n-1}+\mathbf{E}_{n}+\mathbf{\Theta}\mathbf{E}_{n-1}+\mathbf{\Delta}_{n-1},\quad n\in\Z,
\end{align}
where 
\beam\label{delta}
\mathbf{\Delta}_{n-1}:=\mathbf{\Phi}^\infty\mathbf{X}^\infty_{n-1} +
\mathbf{\Theta}^\infty\mathbf{E}^\infty_{n-1}.
\eeam
By Lemma~\ref{ch2_whitenoisetheorem} $\left(\mathbf{E}_n\right)_{n\in\mathbb{Z}}$ is $d$-dimensional WN.
Note that $\mathbf{\Delta}_{n-1}$ in (\ref{delta}) is a $d$-dimensional vector with $l$-th component 
\begin{align}\label{ch4_decoposedeltan}
\left(\mathbf{\Delta}_{n-1}\right)_l 
=\sum_{l'=d+1}^\infty\left\langle\phi\nu_{l'},\nu_l\right\rangle\left\langle X_{n-1},\nu_{l'}\right\rangle +\sum_{l'=d+1}^\infty\left\langle\theta\nu_{l'},\nu_l \right\rangle\left\langle\varepsilon_{n-1},\nu_{l'}\right\rangle.
\end{align}
Thus, the ``error term'' $\mathbf{\Delta}_{n-1}$ depends on $X_{n-1}$, and the vector process $(\mathbf{X}_n)_{n\in\mathbb{Z}}$ in (\ref{ch4_vectorprocess}) is in general {not} a vector \ARMA$(1,1)$ process with innovations $(\mathbf{E}_{n})_{n\in \Z}$. 
 However, we can use a vector \ARMA\ model as an approximation to $(\mathbf{X}_n)_{n\in\Z}$, where we can make $\mathbf{\Delta}_{n-1}$ arbitrarily small by increasing the dimension $d$. 

\ble\label{lemma5.11}
Let $\|\cdot\|_2$ denote the Euclidean norm in $\R^d$, and let the $d$-dimensional vector $\mathbf{\Delta}_{n-1}$ be defined as in (\ref{delta}). 
Then $E\|\mathbf{\Delta}_{n-1}\|_2^2$ is bounded and tends to 0 as $d\to\infty$.
\ele

\bproof
From \eqref{delta} we obtain 
\begin{align}\label{ch4_bound2}
E\|\mathbf{\Delta}_{n-1}\|_2^2
&\leq 2\left(E\|\mathbf{\Phi}^\infty\mathbf{X}^\infty_{n-1}\|_2^2+E\|\mathbf{\Theta}^\infty\mathbf{E}^\infty_{n-1}\|_2^2\right).
\end{align}
We estimate the two parts $E\|\mathbf{\Phi}^\infty\mathbf{X}^\infty_{n-1}\|_2^2$ and $E\|\mathbf{\Theta}^\infty\mathbf{E}^\infty_{n-1}\|_2^2$ separately.
 By (\ref{ch4_decoposedeltan}) we obtain (applying Parseval's equality \eqref{parseval} in the third line),
\begin{align}
E\|\mathbf{\Phi}^\infty\mathbf{X}^\infty_{n-1}\|_2^2
& =E\Big[\sum_{l=1}^{d}\Big(\sum_{l'=d+1}^\infty\left\langle\langle X_{n-1},\nu_{l'}\rangle\phi\nu_{l'},\nu_l\right\rangle\Big)^2\Big]\notag\\
&\leq E\Big[\sum_{l=1}^{\infty}\Big\langle\sum_{l'=d+1}^\infty \langle X_{n-1},\nu_{l'}\rangle\phi\nu_{l'},\nu_l\Big\rangle^2\Big]\notag\\
&=E\Big\|\sum_{l'=d+1}^\infty \langle X_{n-1},\nu_{l'}\rangle\phi\nu_{l'}\Big\|^2.\notag
\end{align}
Since  the scores $(\langle X_{n-1,l},\nu_l\rangle)_{l\in\N}$ are uncorrelated  (cf. the Karhunen-Lo\`eve Theorem~\ref{theorem2.3}), and then using monotone convergence, we find
\begin{align*}
E\|\mathbf{\Phi}^\infty\mathbf{X}^\infty_{n-1}\|_2^2 \leq E \sum_{l'=d+1}^{\infty}\langle X_{n-1},\nu_{l'}\rangle^2  \Vert\phi\nu_{l'}\Vert^2
=\sum_{l'=d+1}^{\infty}E\left(\langle X_{n-1},\nu_{l'}\rangle\right)^2\left\|\phi\nu_{l'}\right\|^2.
\end{align*}
Since by \eqref{ch2_lamdaandvariance} $E\langle X_{n-1},\nu_{l'}\rangle^2=\lambda_{l'}$, we get
\begin{align}\label{ch4_bound1}
\sum_{l'=d+1}^{\infty}E\left(\langle X_{n-1},\nu_{l'}\rangle\right)^2\left\|\phi\nu_{l'}\right\|^2&=\sum_{l'=d+1}^{\infty}\lambda_{l'}\Vert\phi\Vert_{\call}^2\Vert\nu_{l'}\Vert^2
\leq \Vert\phi\Vert_{\call}^2\sum_{l'=d+1}^{\infty}\lambda_{l'}.
\end{align}
The bound for $E\|\mathbf{\Theta}^\infty\mathbf{E}^\infty_{n-1}\|_2^2$ can be obtained in exactly the same way, and we calculate
\begin{align}\label{ch4_boundthetaepsilon}
E\|\mathbf{\Theta}^\infty\mathbf{E}^\infty_{n-1}\|_2^2\notag
&\leq\sum_{l'=d+1}^{\infty}E\langle\varepsilon_{n-1},\nu_{l'}\rangle^2\left\|\theta\nu_{l'}\right\|^2\\
&\leq \Vert \theta\Vert^2_{\call}\sum_{l'=d+1}^{\infty} E\langle \langle \varepsilon_{n-1},\nu_{l'}\rangle\varepsilon_{n-1},\nu_{l'}\rangle \notag\\
&=\Vert \theta\Vert^2_{\call}\sum_{l'=d+1}^{\infty} \langle C_{\varepsilon}\nu_{l'},\nu_{l'}\rangle,
\end{align}
where $C_{\varepsilon}$ is the covariance operator of the WN. 
As a covariance operator it has finite nuclear operator norm
$\Vert C_{\varepsilon}\Vert_{\mathcal{N}}:=\sum_{l'=1}^{\infty} \langle C_{\varepsilon}(\nu_{l'}),\nu_{l'}\rangle<\infty$.
 Hence,  $\sum_{l'=d+1}^{\infty} \langle C_{\varepsilon}\nu_{l'},\nu_{l'}\rangle\rightarrow 0 $ for $d\rightarrow \infty$.
Combining (\ref{ch4_bound2}), (\ref{ch4_bound1}) and (\ref{ch4_boundthetaepsilon}) we find that $E\|\mathbf{\Delta}_{n-1}\|_2^2$ is bounded and tends to 0 as $d\to\infty$.
\eproof

For the vector ARMA$(1,q)$ model the proof of boundedness of $E\|\mathbf{\Delta}_{n-1}\|_2^2$ is analogous.
We now summarize our findings for a functional ARMA$(1,q)$ process. 
 
\bthe\label{ch4_stationaryvector}
Consider a functional \ARMA$(1,q)$ process for $q\in\N$ such that Assumption~\ref{farma} holds. 
For $d\in\N$, the vector process of \eqref{4.23} 
has the representation
\beao
\bfx_n = \mathbf{\Phi}\mathbf{X}_{n-1}+\mathbf{E}_{n}+\sum_{j=1}^q\mathbf{\Theta}_q\mathbf{E}_{n-j}+\mathbf{\Delta}_{n-1},\quad n\in\Z,
\eeao
where 
$$\mathbf{\Delta}_{n-1}:=\mathbf{\Phi}^\infty\bfx_{n-1}^\infty + \sum_{j=1}^q\mathbf{\Theta}_j^\infty\mathbf{E}_{n-j},$$ 
and all quantities are defined analogously to \eqref{4.23}, \eqref{ch4_vectorprocess}, and \eqref{delta}.
Define 
\begin{align}\label{vectorarma}
\check{\bfx}_n = \mathbf{\Phi}\check{\bfx}_{n-1}+\mathbf{E}_{n}+\sum_{j=1}^q\mathbf{\Theta}_j\mathbf{E}_{n-j} ,\quad  n\in\mathbb{Z}.
 \end{align}
Then  both the functional \ARMA$(1,q)$ process $(X_n)_{n\in\Z}$ in \eqref{ch4_FARMA} and the $d$-dimensional vector process $(\check\bfx_n)_{n\in\Z}$ in \eqref{vectorarma} have a unique stationary and causal solution.
Moreover, $E\|\mathbf{\Delta}_{n-1}\|_2^2$ is bounded and tends to 0 as $d\to\infty$.
\ethe

\begin{proof}
Recall from \eqref{ch4_simplifiedblockmatrix} the $d\times d$ matrix $\mathbf{\Phi}$ 
of the vector process \eqref{vectorarma}.
In order to show that \eqref{vectorarma} has a stationary solution, by Theorem 11.3.1 of \cite{brockwell},  it suffices to prove that every eigenvalue $\lambda_k$ of $\mathbf{\Phi}$ with corresponding eigenvector  $\textbf{a}_k=(\textbf{a}_{k,1},\dots,\textbf{a}_{k,d}) $ satisfies  $\vert\lambda_k\vert<1$ for $k=1,\dots,d$. 
Note that $\vert\lambda_k\vert<1$ is equivalent to $\vert \lambda_k^{j_0} \vert <1$ for all $j_0\in\N$.
Define $a_k := \textbf{a}_{k,1}\nu_1+ \dots + \textbf{a}_{k,d}\nu_d \in H$, then by Parseval's equality \eqref{parseval},
$\Vert a_k\Vert^2=\sum_{l=1}^d \vert \langle a_{k},\nu_l\rangle \vert^2=\sum_{l=1}^d \mathbf{a}_{k,l}^2=\Vert\mathbf{a}_k\Vert^2_2=1$ for $k=1,\dots,d$.
With the orthogonality of $\nu_1,\dots,\nu_d$ we find 
$	\Vert \mathbf{\Phi}\textbf{a}_{k} \Vert_2^2 = \sum_{l=1}^d \big(\sum_{l'=1}^d\langle \phi\nu_{l'},\nu_l\rangle \textbf{a}_{k,l}\big)^2$.
Defining $A_d=\sp\{\nu_1,\dots,\nu_d\}$, we calculate
\begin{align*}
	\Vert P_{A_d}\phi P_{A_d} a_k \Vert^2 
	= \sum_{l=1}^d\big\langle\phi(\sum_{l'=1}^d \textbf{a}_{k,l'}\nu_{l'}),\nu_l\big\rangle^2 \Vert \nu_l \Vert ^2= \sum_{l=1}^d\big(\sum_{l'=1}^d\textbf{a}_{k,l'}\langle\phi \nu_{l'},\nu_l\rangle\big)^2 = \Vert \Phi \textbf{a}_{k} \Vert_2^2.
\end{align*}
Hence, for $j_0$ as in Assumption~\ref{farma}, 
\begin{align*}
\vert \lambda_k^{j_0} \vert& = \Vert \lambda_k^{j_0} \textbf{a}_k \Vert_2 = \Vert \mathbf{\Phi}^{j_0} \textbf{a}_k \Vert_2 = \big\Vert \big(P_{A_d} \phi P_{A_d}\big)^{j_0} a_k\big\Vert \\& \leq \big\Vert \big(P_{A_d} \phi P_{A_d}\big)^{j_0} \big\Vert_{\call} \Vert a_k \Vert \leq \Vert \phi^{j_0} \Vert_{\call} < 1,
\end{align*}
which finishes the proof.
\end{proof}

In order to extend approximation \eqref{vectorarma} of a functional ARMA$(1,q)$ process  to a functional ARMA$(p,q)$ process we use again the state space representation (\ref{statespace}) given by
\begin{equation*}
Y_n=\wt{\phi}Y_{n-1}+\sum_{j=0}^q\wt{\theta}_j\delta_{n-j},\quad n\in\Z,
\end{equation*}
where $Y_n$, $\wt{\theta}_0=I$, $\wt{\phi}$, $\wt{\theta}_1,\dots,\wt{\theta}_q$ and $\delta_n$ are defined as in Theorem \ref{theo_statespace} and take values in $H_p=\left(L^2([0,1])\right)^p$; cf. \eqref{4.13}.

\bthe\label{cor5.14}
Consider the functional \ARMA$(p,q)$ process as defined in \eqref{ch4_FARMA} such that Assumption~\ref{farmap} holds. 
Then for $d\in\N$ the vector process of \eqref{4.23}
has the  representation
\beam\label{vectorprocess_pq}
\bfx_n =\sum_{i=1}^p \mathbf{\Phi}_i\mathbf{X}_{n-i}+ \mathbf{E}_{n}+\sum_{j=1}^q\mathbf{\Theta}_q\mathbf{E}_{n-j} +\mathbf{\Delta}_{n-1},\quad n\in\Z,
\eeam
where 
$$\mathbf{\Delta}_{n-1}:=\sum_{i=1}^p\mathbf{\Phi}_i^\infty\bfx_{n-i}^\infty + \sum_{j=1}^q\mathbf{\Theta}_j^\infty\mathbf{E}_{n-j},$$ 
and all quantities are defined analogously to \eqref{4.23}, \eqref{ch4_vectorprocess}, and \eqref{delta}.
Define 
\beam\label{vectorarma_pq}
 \check\bfx_n = \sum_{i=1}^p\mathbf{\Phi}_i\check\bfx_{n-i}+\mathbf{E}_{n}+\sum_{j=1}^q\mathbf{\Theta}_q\mathbf{E}_{n-1} ,\quad  n\in\mathbb{Z}.
 \eeam
Then  both the functional \ARMA$(p,q)$ process $(X_n)_{n\in\Z}$ in \eqref{ch4_FARMA} and the $d$-dimensional vector process $(\check\bfx_n)_{n\in\Z}$ in \eqref{vectorarma_pq} have a unique stationary and causal solution.
Moreover, $E\|\mathbf{\Delta}_{n-1}\|_2^2$ is bounded and tends to 0 as $d\to\infty$.
\ethe

We are now interested in conditions for $(\mathbf{X}_n)_{n\in\Z}$ to exactly follow a vector ARMA$(p,q)$ model.
A trivial condition is that the projections of $\phi_i$ and $\theta_j$ on $A_d^\perp$, the orthogonal complement  of $A_d=\sp\{\nu_1,\ldots,\nu_d\}$, satisfy
\beao
P_{A_d^{\perp}}\phi_i P_{A_d^{\perp}} = P_{A_d^{\perp}}\theta_jP_{A_d^{\perp}}= 0
\eeao
 for all $i=1,\dots,p$ and $j=1,\dots,q$.
In that case $\check\bfx_n = \mathbf{X}_n$ for all $n\in\Z$.

However, as we show next, the assumptions on the moving average parameters $\theta_1,\dots,\theta_q$ are actually not required. We start with a well-known result that characterizes vector MA processes.

\begin{lemma}[\citet{brockwell}, Proposition~3.2.1] \label{maq}
If $(\mathbf{X}_n)_{n\in\mathbb{Z}}$ is a stationary vector process with autocovariance matrix $\mathbf{C}_{\mathbf{X}_h,\mathbf{X}_0}=E[\mathbf{X}_h \mathbf{X}^\top_0]$ with $\mathbf{C}_{\mathbf{X}_q,\mathbf{X}_0} \neq 0$ and $\mathbf{C}_{\mathbf{X}_h,\mathbf{X}_0}=0$ for $|h|>q$, then $(\mathbf{X}_n)_{n\in\mathbb{Z}}$ is a vector \MA$(q)$.
\end{lemma}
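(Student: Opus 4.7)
The plan is to exploit the vanishing of the autocovariance matrix function outside a finite window by passing to the frequency domain. Under the hypothesis, the spectral density matrix
\[
f(\lambda) = \frac{1}{2\pi}\sum_{h=-q}^q \mathbf{C}_{\mathbf{X}_h,\mathbf{X}_0}\, e^{-ih\lambda},\qquad \lambda\in[-\pi,\pi],
\]
is a Hermitian, non-negative definite $d\times d$ matrix-valued trigonometric polynomial, and the condition $\mathbf{C}_{\mathbf{X}_q,\mathbf{X}_0}\neq 0$ forces its degree to be exactly $q$.

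The key tool I would invoke is the matrix Fej\'er--Riesz theorem, which guarantees a factorization
\[
2\pi f(\lambda) = \Theta(e^{-i\lambda})\,\Sigma\,\Theta^*(e^{-i\lambda}),
\]
with $\Theta(z) = I + \Theta_1 z + \cdots + \Theta_q z^q$ a matrix polynomial of degree at most $q$ and $\Sigma$ a non-negative definite $d\times d$ matrix. Choosing the canonical factor with $\det\Theta(z)\neq 0$ for $|z|<1$, I would expand $\Theta(z)^{-1}=\sum_{k\geq 0}\Pi_k z^k$ and define $\mathbf{Z}_n = \sum_{k\geq 0}\Pi_k \mathbf{X}_{n-k}$ as a linear filter of the past. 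A computation in the spectral domain then shows that $(\mathbf{Z}_n)_{n\in\mathbb{Z}}$ is white noise with covariance $\Sigma$ and that $\mathbf{X}_n = \mathbf{Z}_n + \Theta_1 \mathbf{Z}_{n-1} + \cdots + \Theta_q \mathbf{Z}_{n-q}$, which is the desired vector MA$(q)$ representation. An equivalent verification is to match autocovariances: by construction the MA process built from $(\mathbf{Z}_n)$ has spectral density $f$, hence the same second-order structure as $(\mathbf{X}_n)$.

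The main obstacle is the matrix spectral factorization. In the scalar case Fej\'er--Riesz is elementary (collect roots on or inside the unit disk), but the matrix version requires more care, in particular when $f(\lambda)$ is rank-deficient at some frequencies; then $\Sigma$ may be singular and $\Theta$ may fail to be globally invertible, so the filter defining $\mathbf{Z}_n$ must be interpreted on an appropriate subspace. An alternative approach that sidesteps this issue is to apply the innovations algorithm to $(\mathbf{X}_n)$: the coefficients $\Theta_{n,j}$ of the finite-past innovation representation vanish for $j>q$ because $\mathbf{X}_n$ is uncorrelated with $\mathbf{X}_{n-j}$ for $j>q$, and one would then show that the truncated representation converges (as $n\to\infty$) to the stationary vector MA$(q)$ with coefficients $\Theta_j=\lim_{n\to\infty}\Theta_{n,j}$. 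This route parallels the classical univariate proof in Brockwell \& Davis and may feel more natural in the present setting.
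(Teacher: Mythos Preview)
The paper does not supply a proof for this lemma; it is simply quoted from Brockwell \& Davis (Proposition~3.2.1) and used as a black box. Your sketch therefore goes well beyond what the paper provides.

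Both of your routes are standard and correct in outline. The innovations-algorithm argument is in fact the one Brockwell \& Davis give for the scalar statement (their Proposition~3.2.1 is univariate; the vector extension follows the same pattern and is treated in their Chapter~11), so that route aligns with the cited source. The spectral-factorization route via the matrix Fej\'er--Riesz theorem is also valid, and you have correctly flagged the one genuine technicality: when $f(\lambda)$ is rank-deficient at some frequencies, the factorization with $\Sigma$ possibly singular and $\Theta$ not globally invertible needs care, but this is classical (Wiener--Masani, Rozanov, Hannan). Either approach would constitute a legitimate proof; neither is needed for the paper's purposes, since the result is invoked only to conclude that a $q$-correlated vector process admits an MA$(q)$ representation.
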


\begin{proposition}
Let $A_d=\sp\{\nu_1,\ldots,\nu_d\}$ and $A_d^{\perp}$ its orthogonal complement.
If  $P_{A_d^{\perp}}\phi_iP_{A_d^{\perp}}=0$ for all $i=1,\dots,p$,
then the $d$-dimensional process $(\bfx_n)_{n\in\Z}$ as in (\ref{vectorprocess_pq}) is a vector \ARMA$(p,q)$ process.
\end{proposition}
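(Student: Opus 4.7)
The plan is to apply Lemma~\ref{maq} to the residual process
\[
\mathbf{W}_n := \bfx_n - \sum_{i=1}^p \mathbf{\Phi}_i \bfx_{n-i} \;=\; \mathbf{E}_n + \sum_{j=1}^q \mathbf{\Theta}_j \mathbf{E}_{n-j} + \sum_{i=1}^p \mathbf{\Phi}_i^\infty \bfx_{n-i}^\infty + \sum_{j=1}^q \mathbf{\Theta}_j^\infty \mathbf{E}_{n-j}^\infty,
\]
read off from \eqref{vectorprocess_pq}. Stationarity of $(\mathbf{W}_n)_{n\in\Z}$ is inherited from the joint stationarity of $(X_n,\eps_n)_{n\in\Z}$ guaranteed by Theorem~\ref{cor5.14}. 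The goal is to show $\cov(\mathbf{W}_n,\mathbf{W}_{n+h}) = \mathbf{0}$ for every $|h|>q$; Lemma~\ref{maq} then yields that $(\mathbf{W}_n)$ is a vector MA$(q)$, and the representation $\bfx_n = \sum_i \mathbf{\Phi}_i \bfx_{n-i} + \mathbf{W}_n$ (with suitably rearranged coefficients) exhibits $(\bfx_n)$ as a vector ARMA$(p,q)$.

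The three explicitly-MA-type terms in $\mathbf{W}_n$ are linear in $\eps_n,\eps_{n-1},\ldots,\eps_{n-q}$ and hence contribute nothing to $\cov(\mathbf{W}_n,\mathbf{W}_{n+h})$ for $|h|>q$ by orthogonality of the WN. The delicate piece is $\sum_i \mathbf{\Phi}_i^\infty \bfx_{n-i}^\infty$, which a priori is a function of $X_{n-i}^\perp$ and hence of the entire noise history. Here the hypothesis intervenes decisively: projecting \eqref{ch4_FARMA} onto $A_d^\perp$ and using $P_{A_d^\perp}\phi_i P_{A_d^\perp} = 0$ to annihilate the tail-to-tail coupling, the tail process satisfies the no-self-memory identity
\[
X_n^\perp \;=\; \sum_{i=1}^p P_{A_d^\perp}\phi_i X_{n-i,d} \;+\; P_{A_d^\perp}\eps_n \;+\; \sum_{j=1}^q P_{A_d^\perp}\theta_j \eps_{n-j},
\]
so $X_n^\perp$ depends on the past only through the heads $\bfx_{n-1},\ldots,\bfx_{n-p}$ and on noise only through $\eps_n,\ldots,\eps_{n-q}$.

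The next step is to substitute this expression for $X_{n-i}^\perp$ into $\sum_i \mathbf{\Phi}_i^\infty \bfx_{n-i}^\infty = \sum_i(\langle\phi_i X_{n-i}^\perp,\nu_l\rangle)_{l=1}^d$, using once more that $\phi_i$ maps $A_d^\perp$ into $A_d$ under the hypothesis. This splits $\sum_i \mathbf{\Phi}_i^\infty \bfx_{n-i}^\infty$ into a part linear in past heads $\bfx_{n-k}$ and a part linear in a finite noise window. Pulling the head-dependent piece over to the AR side (the AR coefficients of the ARMA$(p,q)$ representation of $(\bfx_n)$ are not forced to equal the $\mathbf{\Phi}_i$ coming from the functional operators) and combining all remaining noise-dependent contributions yields an expression for $\mathbf{W}_n$ that is linear in $\eps_n,\ldots,\eps_{n-q}$ alone, from which the required covariance vanishing is immediate and Lemma~\ref{maq} delivers the MA$(q)$ conclusion.

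The main obstacle is the bookkeeping in the third step: simultaneously (i) absorbing the head-dependent contribution of $\sum_i\mathbf{\Phi}_i^\infty\bfx_{n-i}^\infty$ into the AR side without exceeding order $p$, and (ii) verifying that the apparently longer noise window is absorbed into an effective MA$(q)$ structure once the autocovariance calculation is performed. The hypothesis $P_{A_d^\perp}\phi_i P_{A_d^\perp}=0$ is used exactly at these two points, which is consistent with the remark that no analogous assumption on the $\theta_j$ is required: the moving-average operators contribute only through noise terms already lying in the admissible window $\{\eps_n,\ldots,\eps_{n-q}\}$, whereas without the condition on the $\phi_i$'s the autoregressive operators would leak head contributions from arbitrarily far in the past.
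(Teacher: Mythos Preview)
Your substitution strategy has a genuine gap that you flag as ``the main obstacle'' but then treat as bookkeeping. It is not. After inserting
\[
X_{n-i}^\perp \;=\; \sum_{k=1}^p P_{A_d^\perp}\phi_k X_{n-i-k,d} \;+\; P_{A_d^\perp}\eps_{n-i} \;+\; \sum_{j=1}^q P_{A_d^\perp}\theta_j \eps_{n-i-j}
\]
into $\sum_i \mathbf{\Phi}_i^\infty \mathbf{X}_{n-i}^\infty$, the head-dependent part contributes terms in $\mathbf{X}_{n-i-k}$ with $i,k\in\{1,\ldots,p\}$, hence AR lags up to $2p$; the noise part contributes $\eps_{n-i-j}$ with $i\le p$ and $j\le q$, hence noise lags up to $p+q$. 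Nothing in the condition $P_{A_d^\perp}\phi_i P_{A_d^\perp}=0$ forces these extra lags to vanish, and in general they do not: already for $p=q=1$ your route produces a vector ARMA$(2,2)$, with the lag-$2$ AR coefficient built from $P_{A_d}\phi P_{A_d^\perp}\phi P_{A_d}$ and the lag-$2$ noise coefficient from $P_{A_d}\phi P_{A_d^\perp}\theta$. So the argument does not close at order $(p,q)$.

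The paper takes a much shorter route and avoids substitution entirely. It uses the hypothesis to drop $\sum_i \mathbf{\Phi}_i^\infty \mathbf{X}_{n-i}^\infty$ outright (the text says ``$\phi_i$ only acts on $A_d$''). Note that the entries of $\mathbf{\Phi}_i^\infty$ are $\langle \phi_i \nu_{l'},\nu_l\rangle$ with $l\le d<l'$, so what is really being used is $P_{A_d}\phi_i P_{A_d^\perp}=0$, i.e.\ $A_d^\perp$ is $\phi_i$-invariant; the printed condition $P_{A_d^\perp}\phi_i P_{A_d^\perp}=0$ appears to be a typo, and your literal reading of it is precisely what sends you down the harder (and, under that reading, unclosable) path. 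With the $\mathbf{\Phi}_i^\infty$ term gone, the residual
\[
\mathbf{R}_n \;=\; \mathbf{E}_n + \sum_{j=1}^q \mathbf{\Theta}_j \mathbf{E}_{n-j} + \sum_{j=1}^q \mathbf{\Theta}_j^\infty \mathbf{E}_{n-j}^\infty
\]
is nothing but the coordinate vector of $P_{A_d}R_n$ for the functional MA$(q)$ process $R_n=\sum_{j=0}^q\theta_j\eps_{n-j}$; hence $\mathbf{C}_{\mathbf{R}_h,\mathbf{R}_0}$ vanishes for $|h|>q$ because $C_{R_h,R_0}$ does, and Lemma~\ref{maq} applies immediately with no order-matching to perform.
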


\bproof 
Since $\phi_i$ for $i=1,\dots,p$ only acts on $A_d$, from \eqref{vectorprocess_pq} we get
\begin{align*}
\mathbf{X}_{n}&=\sum_{i=1}^{p}\mathbf{\Phi}_i\mathbf{X}_{n-i}+\mathbf{E}_{n}+\sum_{j=1}^q\mathbf{\Theta}_j\mathbf{E}_{n-j}+\mathbf{\Delta}_{n-1}\\
&=\sum_{i=1}^{p}\mathbf{\Phi}_i\mathbf{X}_{n-i}+\mathbf{E}_n+\sum_{j=1}^q\mathbf{\Theta}_j\mathbf{E}_{n-j}+ \sum_{j=1}^q\mathbf{\Theta}_j^\infty\mathbf{E}^\infty_{n-j},\quad n\in\Z.
\end{align*}
To ensure that $(\mathbf{X}_n)_{n\in\Z}$ follows a vector ARMA$(p,q)$ process, we have to show that $$\mathbf{R}_n:=\mathbf{E}_{n}+\sum_{j=1}^q\mathbf{\Theta}_j\mathbf{E}_{n-j}+ \sum_{j=1}^q\mathbf{\Theta}_j^\infty\mathbf{E}^\infty_{n-j},\quad n\in\Z,$$ 
follows a vector MA$(q)$ model. 
According to Lemma~\ref{maq} it is sufficient to verify that $(\mathbf{R}_n)_{n\in\Z}$ is stationary and has an appropriate autocovariance structure. 

Defining (with $\theta_0=I$)
\begin{align*}
R_n:=\sum_{j=0}^q\theta_j\varepsilon_{n-j},\quad n\in\Z,
\end{align*}
where $\theta_1,\dots,\theta_q$ are as in \eqref{ch4_FARMA}, observe that $\mathbf{R}_n=(\langle R_n,\nu_1\rangle,\dots, \langle R_n,\nu_d\rangle)$ is isometrically isomorph to $P_{A_d}R_n=\sum_{j=1}^d \langle R_n,\nu_j\rangle \nu_j$ for all $n\in\Z$.
Hence, stationarity of $(\mathbf{R}_n)_{n\in\Z}$ immediately follows from the stationarity of $(R_n)_{n\in\Z}$.
Furthermore, 
\begin{align*}
E[\langle P_{A_d}R_0,\cdot \rangle P_{A_d}R_h ] = P_{A_d} E [\langle R_0,\cdot\rangle R_h ] P_{A_d}= P_{A_d}C_{R_h,R_0}P_{A_d}.
\end{align*}
But since $(R_n)_{n\in\Z}$ is a functional  \MA$(q)$ process, $C_{R_h,R_0}=0$ for $\vert h \vert > q$. 
By the relation between $P_{A_d}R_n$ and $\mathbf{R}_n$ we also have  $\mathbf{C}_{\mathbf{R}_h,\mathbf{R}_0}=0$ for $|h|>q$ and, hence, $(\mathbf{R}_n)_{n\in\mathbb{Z}}$ is a vector MA$(q)$.
\eproof

\section{Prediction of functional \ARMA\ processes}\label{s5}

For $h\in\N$ we derive the best $h$-step linear predictor of a functional ARMA$(p,q)$ process $(X_n)_{n\in\Z}$ based on $\bfx_1,\dots,\bfx_n$ as defined in \eqref{vectorprocess_pq}. 
We then compare the vector best linear predictor to the functional best linear predictor based on $X_1,\dots,X_n$ and show that, under regularity conditions, the difference is bounded and tends to $0$ as $d$ tends to infinity.

\subsection{Prediction based on the vector process}\label{s51}

In finite dimensions the concept of a {best linear predictor} is well-studied. 
For a $d$-dimensional stationary time series $(\bfx_n)_{n\in\Z}$ we denote the {\em matrix linear span} of $\mathbf{X}_1,\dots,\mathbf{X}_n$ by
\beao
\mathbf{M}'_1:=\Big\{\sum_{i=1}^{n}\mathbf{A}_{ni}\mathbf{X}_i :\ \mathbf{A}_{ni}\ \text{are real} \ d\times d\ \text{matrices}, i=1,\ldots,n\Big\}. 
\eeao
Then for $h\in\N$ the \emph{$h$-step vector best linear predictor} $\wh{\mathbf{X}}_{n+h}$ of $\mathbf{X}_{n+h}$ based on $\mathbf{X}_1,\dots,\mathbf{X}_n$ is defined as the  projection of $\mathbf{X}_{n+h}$ on the closure $\mathbf{M}_1$ of $\mathbf{M}'_1$ in $L^2_{\R^d}$; i.e.,
\begin{align}\label{vectorBLP}
\wh\bfx_{n+h}:=P_{\mathbf{M}_1}\mathbf{X}_{n+h}.
\end{align}
Its properties are given by the projection theorem (e.g. Theorem~2.3.1 of \citet{brockwell}) and can be summarized as follows.

\brem\label{ch4_setofM1}
Recall that $\|\cdot\|_2$ denotes the Euclidean norm in $\R^d$ and $\langle \,,\,\rangle_{\R^d}$ the corresponding scalar product.\\
(i) \,  $E\langle\mathbf{X}_{n+h}-\wh{\mathbf{X}}_{n+h},\mathbf{Y}\rangle_{\R^d}=\mathbf{0}$ for all $\mathbf{Y}\in\mathbf{M}_1$.\\
(ii) \, $\wh{\mathbf{X}}_{n+h}$ is the unique element in $\mathbf{M}_1$ such that
$$E\|\mathbf{X}_{n+h}-\wh{\mathbf{X}}_{n+h}\|^2_2=\inf_{\mathbf{Y}\in\mathbf{M}_1}E\|\mathbf{X}_{n+h}-\mathbf{Y}\|^2_2.$$
(iii) \, $\mathbf{M}_1$ is a linear subspace of $\mathbb{R}^d$.
\erem
In analogy to the prediction algorithm suggested in \citet{aue}, a method for finding the best linear predictor of $X_{n+h}$ based on $\bfx_1,\dots,\bfx_n$ is the following: \vspace*{1pt}

\noindent\textbf{Algorithm 1} \footnote{Steps (1) and (3) are implemented in the {\tt R} package  {\tt{fda}}, and (2) in the {\tt {R}} package {\tt{mts}}}
\begin{enumerate}
\item[(1)] 
Fix $d\in\N$. Compute the FPC scores $\left<X_k,\nu_l\right>$ for $l=1,\ldots,d$ and $k=1,\ldots,n$ by projecting each $X_k$ on $\nu_1,\dots,\nu_d$. 
Summarize the scores in the vector
\begin{equation*}
\mathbf{X}_k:=(\left<X_k,\nu_1\right>,\dots,\left<X_k,\nu_d\right>)^\top,\quad k=1,\dots\,n.
\end{equation*} 
\item[(2)] 
Consider the $d$-dimensional vectors $\mathbf{X}_1,\dots,\mathbf{X}_n$. 
For $h\in\N$ compute the vector best linear predictor of $\mathbf{X}_{n+h}$ by means of \eqref{vectorBLP}:
\begin{displaymath}
\wh{\mathbf{X}}_{n+h}=(\wh{\left<X_{n+h},\nu_1\right>},\dots,\wh{\left<X_{n+h},\nu_d\right>})^\top.
\end{displaymath} 
\item[(3)] 
Re-transform the  vector best linear predictor $\wh{\mathbf{X}}_{n+h}$ into a functional form $\wh{X}_{n+h}$ via the truncated Karhunen-Lo\`eve representation: 
\begin{align}\label{KLtrunc}
\wh{X}_{n+h}&:=\wh{\left<X_{n+h},\nu_1\right>}\nu_1+\dots+\wh{\left<X_{n+h},\nu_d\right>}\nu_d.
\end{align}
\end{enumerate}

For functional \AR$(1)$ processes, \citet{aue} compare the resulting predictor \eqref{KLtrunc} to the functional best linear predictor. 
Our goal is  to extend these results to functional ARMA$(p,q)$ processes. 
However, when moving away from AR models, the best linear predictor is no longer directly given by the process. We start by recalling the notion of best linear predictors in Hilbert spaces.

\subsection{Functional best linear predictor}

For $h\in\N$ we introduce  the {\em $h$-step functional best linear predictor}  $\wh{X}_{n+h}$ of $X_{n+h}$, based on $X_1,\dots,X_n$, as proposed in \citet{bosq2014}. 
It is the projection of $X_{n+h}$ on a large enough subspace of $L^2_H$ containing $X_1,\dots,X_n$. 
More formally, we use the concept of {$\mathcal{L}$-{\em closed subspaces} as in Definition~1.1 of \citet{bosq}. 

\bde \label{lcs}
Recall that $\mathcal{L}$ denotes the space of bounded linear operators acting on $H$.
We call $G$ an $\mathcal{L}$-{\em closed subspace} (LCS) of $L^2_H$, if\\
(1) $G$ is a Hilbertian subspace of $L^2_H$.\\
(2) If $X\in G$ and $g\in\mathcal{L}$, then $g \,X\in G$.
\ede

We define 
\begin{align*}
X^{(n)}:=(X_n,\dots,X_1).
\end{align*}
By Theorem~1.8 of \cite{bosq} the LCS $G:=G_{X^{(n)}}$ generated by $X^{(n)}$
is the closure in $L^2_{H^n}$ of $G^{\prime}_{X^{(n)}}$, where
\begin{equation*}
 G^{\prime}_{X^{(n)}}:=\big\lbrace g_n \,X^{(n)} : \ g_{n}\in\mathcal{L}(H^n,H)\,\big\rbrace.
\end{equation*}
For $h\in\N$ the \emph{$h$-step functional best linear predictor} $\wh{X}^G_{n+h}$ of $X_{n+h}$ is defined as the projection of $X_{n+h}$ on $G$, which we write as
\begin{equation}\label{ch4_definitionoffunctionalpredictor}
\wh{X}^G_{n+h}:=P_{G}X_{n+h}\in G.
\end{equation}

Its properties are given by the projection theorem (e.g. Section~1.6 in \cite{bosq}) and are summarized as follows.
\brem\label{ch4_projector} 
(i) \, $E\langle X_{n+h}-\wh{X}^G_{n+h},Y\rangle=0$ for all $Y\in G.$\\
(ii)  \, $\wh{X}^G_{n+h}$ is the unique element in $G$ such that
$$E\|X_{n+h}-\wh{X}^G_{n+h}\|^2=\inf_{Y\in G}E\|X_{n+h}-Y\|^2.$$
(iii) \, The mean squared error of the functional best linear predictor $\wh{X}^G_{n+h}$ is denoted by
\begin{equation}\label{ch4_msefblp}
\sigma_{n,h}^2:=E\|X_{n+h}-\wh{X}^G_{n+h}\|^2.
\end{equation}
\erem

Since in general $G^{\prime}_{X^(n)}$ is not closed  (cf. \citet{bosq2014}, Proposition~2.1), $\wh{X}^G_{n+h}$ is not necessarily of the form $\wh{X}^G_{n+h}=g_{n}^{(h)}\, X^{(n)}$ for some $g_{n}^{(h)} \in \call(H^n,H)$.
However, the following result gives necessary and sufficient conditions for $\wh{X}^G_{n+h}$ to be represented in terms of bounded linear operators.

\begin{proposition}[Proposition~2.2, \citet{bosq2014}]\label{prop_pred_lin}
For $h\in\N$ the following are equivalent: \\
(i) \, There exists some $g \in \call(H^n,H)$ such that $C_{X^{(n)},X_{n+h}}=g\, C_{X^{(n)}}$.\\
(ii) \, $P_G X_{n+h} = g \, X^{(n)}$ for some  $g\in\call(H^n,H)$.
\end{proposition}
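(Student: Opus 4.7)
The plan is to establish both implications simultaneously using the characterization of projections from Remark~\ref{ch4_projector}(i), which says that $\wh X \in G$ equals $P_G X_{n+h}$ if and only if $E\langle X_{n+h} - \wh X, Y\rangle = 0$ for every $Y \in G$. Because $G'_{X^{(n)}}$ is dense in $G$ and the inner product on $L^2_H$ is continuous, the orthogonality reduces to $E\langle X_{n+h} - \wh X, g'X^{(n)}\rangle = 0$ for every $g' \in \call(H^n,H)$. Moreover $g\,X^{(n)}$ automatically lies in $G'_{X^{(n)}} \subseteq G$ for any $g \in \call(H^n,H)$, so the nontrivial content of (ii) is exactly the orthogonality statement.

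The key computation is to rewrite this orthogonality as a covariance-operator identity. I would test against the rank-one operators $g'_{a,y} \in \call(H^n,H)$ defined by $g'_{a,y}(\xi) := \langle a, \xi\rangle_{H^n}\, y$ for $a \in H^n$ and $y \in H$. A direct calculation then gives
\begin{align*}
E\langle X_{n+h} - g\,X^{(n)},\, g'_{a,y}X^{(n)}\rangle
 &= E\bigl[\langle X_{n+h},y\rangle\, \langle a, X^{(n)}\rangle_{H^n}\bigr] - E\bigl[\langle g\,X^{(n)}, y\rangle\, \langle a, X^{(n)}\rangle_{H^n}\bigr] \\
 &= \langle C_{X_{n+h},X^{(n)}}(a) - g\, C_{X^{(n)}}(a),\, y\rangle,
\end{align*}
using the definitions of the cross-covariance and covariance operators together with the adjoint identity $\langle g\,\eta, y\rangle = \langle \eta, g^* y\rangle_{H^n}$. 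Since the family $\{g'_{a,y} : a \in H^n,\, y \in H\}$ spans a set dense in $\call(H^n,H)$ in the relevant topology and separates points, vanishing of the right-hand side for all $a,y$ is equivalent to the operator identity $C_{X_{n+h},X^{(n)}} = g\, C_{X^{(n)}}$.

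With this correspondence in hand, both directions drop out. For (ii)$\Rightarrow$(i) I read off the operator equation from the orthogonality that defines $P_G X_{n+h} = g\,X^{(n)}$. For (i)$\Rightarrow$(ii) the hypothesized $g$ produces an element $g\,X^{(n)} \in G$ satisfying the required orthogonality, hence equal to $P_G X_{n+h}$ by the uniqueness clause of Remark~\ref{ch4_projector}(ii). The main obstacle is purely bookkeeping: keeping the conventions for the cross-covariance operators straight (in particular which subscript pairs with the argument, and which inner product is used on $H^n$ versus $H$), and confirming that the rank-one test operators suffice to pin down equality of two linear operators between $H^n$ and $H$. Once those items are settled, the proposition reduces to a direct application of the projection theorem.
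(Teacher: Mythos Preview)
Your argument is correct and follows essentially the same route as the paper. The paper does not prove Proposition~\ref{prop_pred_lin} itself (it is quoted from \citet{bosq2014}), but its proof of the analogous Proposition~\ref{prop_pred_hil} states that the argument is ``similar'' and proceeds exactly as you do: one observes $C_{X^{(n)},\,gX^{(n)}}=g\,C_{X^{(n)}}$, so the operator identity in (i) is equivalent to $C_{X^{(n)},\,X_{n+h}-gX^{(n)}}=0$, which in turn is equivalent to $X_{n+h}-gX^{(n)}\perp G$. Your use of rank-one test operators $g'_{a,y}$ is just an explicit unpacking of the step ``$C_{X^{(n)},Z}=0 \Leftrightarrow Z\perp G$''.

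Two minor remarks. First, with the paper's convention (visible in the proof of Proposition~\ref{prop_pred_hil}) one has $C_{A,B}(\cdot)=E[\langle A,\cdot\rangle B]$, so your displayed identity should read $\langle C_{X^{(n)},X_{n+h}}(a)-g\,C_{X^{(n)}}(a),\,y\rangle$ rather than with the subscripts swapped; you already flagged this as bookkeeping. Second, the phrase ``dense in $\call(H^n,H)$ in the relevant topology'' is not quite the right justification: what you actually need is that vanishing against all $g'_{a,y}X^{(n)}$ forces $C_{X^{(n)},\,X_{n+h}-gX^{(n)}}=0$, and then orthogonality to an arbitrary $g'X^{(n)}$ follows by expanding $\langle Z,g'X^{(n)}\rangle=\sum_i\langle Z,e_i\rangle\langle X^{(n)},{g'}^{*}e_i\rangle_{H^n}$ in an ONB and using that each summand has zero expectation. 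This is routine, but it is cleaner than invoking an unspecified density.
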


This result allows us to derive conditions, such that the difference between the predictors \eqref{vectorBLP} and \eqref{ch4_definitionoffunctionalpredictor} can be computed. 
Weaker conditions are needed, if $\wh{X}^G_{n+h}$ admits a representation  $\wh{X}^G_{n+h}= s_n^{(h)}  X^{(n)}$ for some Hilbert-Schmidt operator $s_{n}^{(h)}$ from $H^n$ to $H$ ($s_n^{(h)}\in \cals(H^n,H)$). 

\begin{proposition}\label{prop_pred_hil}
For $h\in\N$ the following are equivalent:\\
(i) \, There exists some $s \in \cals(H^n,H)$ such that $C_{X^{(n)},X_{n+h}}=s\, C_{X^{(n)}}$.\\
(ii) \, $P_{G}X_{n+h} = s \, X^{(n)}$ for some $s\in\cals(H^n,H)$.
\end{proposition}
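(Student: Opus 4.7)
The plan is to follow the argument of Proposition~\ref{prop_pred_lin} with essentially no change, noting that in both implications the operator $s$ carries over unchanged from one statement to the other, so the Hilbert--Schmidt class is preserved for free. The key structural facts needed are $\cals(H^n,H)\subset \call(H^n,H)$ and that $G$ is the closure in $L^2_H$ of $G'_{X^{(n)}}=\{T X^{(n)} : T\in\call(H^n,H)\}$, together with the characterization of the projection $P_G X_{n+h}$ via the orthogonality property in Remark~\ref{ch4_projector}.

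For the direction (ii)$\Rightarrow$(i), I would start from $P_G X_{n+h}=s X^{(n)}$ with $s\in\cals(H^n,H)$ and use Remark~\ref{ch4_projector}(i): for every $T\in\call(H^n,H)$,
$$E\langle X_{n+h}-s X^{(n)},\, T X^{(n)}\rangle = 0.$$
Restricting to rank-one test operators $T(\cdot)=\langle \cdot,x\rangle_{H^n}\,z$ with $x\in H^n$ and $z\in H$, and unfolding the definition of the cross-covariance, I would rewrite this as $\langle (C_{X^{(n)},X_{n+h}}-s\,C_{X^{(n)}})x,\,z\rangle = 0$ for all $z\in H$, whence $C_{X^{(n)},X_{n+h}}=s\,C_{X^{(n)}}$. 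The Hilbert--Schmidt property is inherited directly from the hypothesis on $s$.

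For (i)$\Rightarrow$(ii), given $s\in\cals(H^n,H)$ with $C_{X^{(n)},X_{n+h}}=s\,C_{X^{(n)}}$, I would note that $s\in\call(H^n,H)$ already places $s X^{(n)}$ in $G'_{X^{(n)}}\subset G$. To identify $s X^{(n)}$ as the projection $P_G X_{n+h}$, by Remark~\ref{ch4_projector}(ii) it suffices to verify orthogonality to all of $G$. Continuity of the $L^2_H$-inner product reduces this to orthogonality on $G'_{X^{(n)}}$, and linearity plus the rank-one reduction above reduces it in turn to the single covariance identity that is the hypothesis. Hilbert--Schmidt membership of $s$ is again transferred unchanged.

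The main subtlety, as in the proof of Proposition~\ref{prop_pred_lin}, is the bookkeeping of domains and conventions for the cross-covariance operator $C_{X^{(n)},X_{n+h}}$ between $H^n$ and $H$, together with the density and rank-one argument that turns the single operator equation into orthogonality against the full LCS $G$. The Hilbert--Schmidt class itself plays no active role in either direction; the content of the proposition is that the same operator $s$ appears on both sides, which is exactly why replacing $\call$ by $\cals$ in Proposition~\ref{prop_pred_lin} is legitimate without any extra work.
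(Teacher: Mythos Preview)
Your proposal is correct and follows essentially the same approach as the paper: the paper's proof also notes that it mirrors Proposition~\ref{prop_pred_lin}, shows (i)$\Rightarrow$(ii) by observing that $C_{X^{(n)},sX^{(n)}}=s\,C_{X^{(n)}}$ forces $C_{X^{(n)},X_{n+h}-sX^{(n)}}=0$ and hence $X_{n+h}-sX^{(n)}\perp G$, and reverses this for (ii)$\Rightarrow$(i). Your rank-one test-operator reduction simply unpacks what the paper compresses into the cross-covariance identity, and both arguments agree that the Hilbert--Schmidt property of $s$ is carried along passively.
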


\begin{proof}
The proof is similar to the proof of Proposition~\ref{prop_pred_lin}. 
Assume that (i) holds.
Then, since $C_{X^{(n)},s\,X^{(n)}}=E[\langle X^{(n)},\cdot\rangle \, s\, X^{(n)}] = s\, C_{X^{(n)}}$, we have 
$$C_{X^{(n)},X_{n+h}-s \,X^{(n)}}=0.$$
Therefore, $X_{n+h}-s\, X^{(n)} \perp X^{(n)}$ and, hence, $X_{n+h}-s\,X^{(n)} \perp G$ which gives (ii).\\
For the reverse, note that  (ii) implies
$$ C_{X^{(n)},X_{n+h}-s \, X^{(n)}}=  C_{X^{(n)},X_{n+h}-P_{G}X_{n+h}}=0.$$ 
Thus, $C_{X^{(n)},X_{n+h}}=C_{X^{(n)},s \, X^{(n)}}=s\, C_{X^{(n)}}$, which finishes the proof.
\end{proof}

We proceed with examples of processes where Proposition~\ref{prop_pred_lin} or Proposition~\ref{prop_pred_hil} apply.

\begin{example}\label{autoregressive}
Let $(X_n)_{n\in\Z}$ be a stationary  functional \AR$(p)$ process with representation
$$X_n= \varepsilon_n + \sum_{j=1}^{p} \phi_j X_{n-j}, \quad n\in\Z,$$
where $(\varepsilon_n)_{n\in\Z}$ is {\rm WN} and $\phi_j\in\cals$ are Hilbert-Schmidt operators.  
Then for $n\ge p$, Proposition~\ref{prop_pred_hil} applies for $h=1$, giving the 1-step predictor $P_{G} X_{n+1}=s_n^{(1)} \,X^{(n)}$ for some $s_n^{(1)}\in\mathcal{S}$.
\end{example}

\begin{proof}
We calculate
\begin{align*}
C_{X^{(n)},X_{n+1}}(\cdot) 
= E \big[\langle X^{(n)} , \cdot \rangle(\phi_1,\dots,\phi_p,{0},\dots,{0}) X^{(n)} \big]=  \mathbf{\phi}C_{X^{(n)}}(\cdot),
\end{align*}
where $\mathbf{\phi}=(\phi_1,\dots,\phi_p,{0},\dots,{0})\in\call(H^n,H)$. 
Now let $(e_i)_{i\in\N}$ be an ONB of $H$. 
Then $(f_j)_{j\in\N}$ with $f_1=(e_{1},0,\dots,0)^{\top}$,  $f_2=(0,e_1,0,\dots,0)^{\top}$, $\ldots$ , $f_n=(0,\dots,0,e_1)^{\top}$,  
$f_{n+1}=(e_2,0,\dots,0)^{\top}$, $f_{n+2}=(0,e_2,0,\dots,0)^{\top}$, $\ldots$ , $f_{2n} = (0, \dots, 0,e_2 )^{\top}$, $f_{2n+1}=(e_3,0,\dots,0)^{\top}, \ldots$
is an ONB of $H^n$ and, by orthogonality of  $(e_i)_{i\in\N}$, we get
\begin{align*}
 \Vert \mathbf{\phi} \Vert_{\cals}^{1/2} &= \sum_{j\in\N} \Vert\phi f_j\Vert^2
=\sum_{i\in\N} \sum_{j=1}^p \Vert \phi_j e_i \Vert ^2
=\sum_{j=1}^p \sum_{i\in\N} \Vert\phi_j e_i\Vert^2 =\sum_{j=1}^p \Vert\phi_j\Vert^2_\call  < \infty,
\end{align*}
since $\phi_j\in\cals$ for every $j=1,\dots,p$, which implies that $\phi\in \cals(H^n,H)$.
\end{proof}

\begin{example}\label{ex2}
Let $(X_n)_{n\in\Z}$ be a stationary functional \MA$(1)$ process
$$X_n=\varepsilon_n + \theta\varepsilon_{n-1},\quad  n\in\Z,$$
where $(\varepsilon_n)_{n\in\Z}$ is {\rm WN},  $\Vert\theta\Vert_{\call}<1$, $\theta\in\cals$ and $\theta$ nilpotent, such that $\Vert\theta^j\Vert_\call=0$ for $j>j_0$ for some $j_0\in\N$. 
Then for $n>j_0$, Proposition~\ref{prop_pred_hil} applies.
\end{example}

\begin{proof}
Since $\Vert\theta\Vert_{\call}<1$, $(X_n)_{n\in\Z}$ is invertible, and since $\theta$ is nilpotent, $(X_n)_{n\in\Z}$ can be represented as an AR process of finite order, where the operators in the inverse representation are still Hilbert-Schmidt operators.
Then the statement follows from the arguments of the proof of Example~\ref{autoregressive}.
\end{proof}

\begin{example}\label{ex3}
Let $(X_n)_{n\in\Z}$ be a stationary functional \MA$(1)$ process
$$X_n=\varepsilon_n + \theta \varepsilon_{n-1}, \quad n\in\Z,$$
where $(\varepsilon_n)_{n\in\Z}$ is {\rm WN}, and denote by $C_{\varepsilon}$ the covariance operator of  the WN. 
Assume that $\Vert \theta \Vert_{\call} <1$. 
If $\theta$ and $C_{\varepsilon}$ commute, Proposition~\ref{prop_pred_hil} applies.
\end{example}

\begin{proof}
Stationarity of $(X_n)_{n\in\Z}$ ensures that $C_{X_n,X_{n+1}}=C_{X_0,X_1}$.  
Let $\theta^*$ denote the adjoint operator of $\theta$. 
Since $\theta C_{\varepsilon}=C_{\varepsilon} \theta$, we have that $C_{X_1,X_0}=C_{X_0,X_1}$ which implies $\theta C_{\varepsilon} = C_{\varepsilon} \theta^*=C_{\varepsilon} \theta$. 
{Hence, $C_{\varepsilon}=C_{X_0} - \theta C_\eps \theta^* = C_{X_0} - \theta^2 C_{\varepsilon}$.  
Since $\Vert \theta\Vert_{\call} < 1$, the operator $I+\theta^2$ is invertible. 
Therefore, $C_{\varepsilon} = (I+\theta^2)^{-1} C_{X_0}$, and we get
\begin{align*} 
C_{X_1,X_0}= \theta C_{\varepsilon} =(I+\theta^2)^{-1}\theta C_{X_0}.
\end{align*}
}Furthermore, since the space $\cals$  of Hilbert-Schmidt operators forms an ideal in the space of bounded linear operators (e.g. \cite{dunford}, Theorem~VI.5.4.) and $\theta\in\cals$, also $(I+\theta^2)^{-1}\theta \in\cals$.
\end{proof}

\subsection{Bounds for the error of the vector predictor}\label{s53}

We are now ready to derive bounds for the prediction error caused by the dimension reduction. 
More precisely, for $h\in\N$ we compare the vector best linear predictor 
$\wh{X}_{n+h} =\sum_{j=1}^d \wh{\left<X_{n+h},\nu_j\right>}\nu_j 
$ 
as defined in \eqref{KLtrunc} with the functional best linear predictor
$\wh{X}^G_{n+h}=P_G X_{n+h}$ 
of \eqref{ch4_definitionoffunctionalpredictor}. We first compare them on $\sp\{\nu_1,\ldots,\nu_d\}$, where the vector representations are given by
\begin{align}\label{ch4_gprojection}
& \wh{\mathbf{X}}_{n+h}=(\wh{\left<X_{n+h},\nu_1\right>},\dots,\wh{\left<X_{n+h},\nu_d\right>})^\top, \mbox{ and} \notag \\ 
&\wh{\mathbf{X}}^G_{n+h}:=\left(\left<\wh{X}^G_{n+h},\nu_1\right>,\dots,\left<\wh{X}^G_{n+h},\nu_d\right>\right)^\top.
\end{align}

We formulate assumptions such that for $d\to\infty$ the mean squared distance between the vector best linear predictor $\wh{\mathbf{X}}_{n+h}$ and  the vector $\wh{\mathbf{X}}^G_{n+h}$ becomes arbitrarily small.

For $ l=1,\dots,d$ the $l$-th component of $ \wh{\mathbf{X}}^G_{n+h}$ is given by
\begin{align}\label{ch4_G-discompose}
\left\langle \wh{X}^G_{n+h},\nu_l\right\rangle=\Big\langle\sum_{i=1}^{n}g_{ni}^{(h)} X_i ,\nu_l\Big\rangle &=\Big\langle\sum_{i=1}^{n}\sum_{l'=1}^{\infty}\left\langle X_i,\nu_{l'}\right\rangle g_{ni}^{(h)}\nu_{l'},\nu_l\Big\rangle\notag \\ &=\sum_{i=1}^{n}\sum_{l'=1}^{\infty}\left\langle X_i,\nu_{l'}\right\rangle\left\langle g_{ni}^{(h)}\nu_{l'},\nu_l\right\rangle.
\end{align}
Using the vector representation \eqref{ch4_gprojection}, we write
\begin{small}\begin{align}\label{ch4_bigmatrix-g}
\wh{\mathbf{X}}^{G}_{n+h}&=\sum_{i=1}^{n}\left(\begin{array}{ccc|cc}
\left\langle g_{ni}^{(h)} \nu_1,\nu_1\right\rangle  &\dots &\left\langle g_{ni}^{(h)}\nu_d,\nu_1\right\rangle   &\left\langle g_{ni}^{(h)}\nu_{d+1},\nu_1\right\rangle &\dots \\
\vdots & \vdots &\vdots &\vdots &\vdots \\
\left\langle g_{ni}^{(h)}\nu_1,\nu_d\right\rangle  &\dots &\left\langle g_{ni}^{(h)}\nu_d,\nu_d\right\rangle  &\left\langle g_{ni}^{(h)}\nu_{d+1},\nu_d\right\rangle &\dots\\
\end{array}
\right)
\begin{pmatrix}
  \left\langle X_{i},\nu_1\right\rangle \\
  \vdots\\
  \left\langle X_{i},\nu_d\right\rangle  \\
  \hline
  \left\langle X_{i},\nu_{d+1}\right\rangle\\
  \vdots
 \end{pmatrix}
\nonumber \\ 
 & =: \sum_{i=1}^{n}\mathbf{G}_{ni}^{(h)}\mathbf{X}_i+\sum_{i=1}^{n}\mathbf{G}^{(h)\infty}_{ni}\mathbf{X}_i^\infty,
\end{align}
\end{small}
where
$\mathbf{G}^{(h)}_{ni}$ are  $d\times d$ matrices with $ll'$-th component 
 $\langle g_{ni}^{(h)}\nu_{l'},\nu_l\rangle$ and
 $\mathbf{G}_{ni}^\infty$ are $d\times\infty$ matrices with $ll'$-th component $\langle g_{ni}^{(h)}\nu_{d+l'},\nu_l\rangle$.
 
Moreover, for all $Y\in G$ there exist $n\in\N$ and (possibly unbounded) linear operators $t_{n1},\dots,t_{nn}$ such that
\begin{equation}\label{4.67}
Y=\sum_{i=1}^{n}t_{ni} X_i.
\end{equation}
Similarly as in \eqref{ch4_G-discompose}, we project $Y\in G$ on $\sp\{\nu_1,\dots,\nu_d\}$, which results in
\begin{align}\label{4.68}
\mathbf{Y}&:=\left(\left\langle Y,\nu_1\right\rangle,\ldots,\left\langle Y,\nu_d\right\rangle \right)^\top\nonumber\\
&=\big(\big\langle \sum_{i=1}^{n}t_{ni}X_i,\nu_1\big\rangle,\dots,\big\langle \sum_{i=1}^{n}t_{ni}X_i,\nu_d\big\rangle \big)^\top\nonumber\\
&=:\sum_{i=1}^{n}\mathbf{T}_{ni}\mathbf{X}_i+\sum_{i=1}^{n}\mathbf{T}_{ni}^\infty\mathbf{X}_i^\infty.
\end{align}
The $d\times d$ matrices $\mathbf{T}_{ni}$ and the $d\times\infty$ matrices $\mathbf{T}^\infty_{ni}$ in \eqref{4.67} are defined in the same way as $\mathbf{G}^{(h)}_{ni}$ and $\mathbf{G}_{ni}^{(h)\infty}$ in (\ref{ch4_bigmatrix-g}).
We denote by $\mathbf{M}$ the space of all $ \mathbf{Y}$:
\begin{align*}
\mathbf{M}:=\left\lbrace \mathbf{Y}=\left(\left\langle Y,\nu_1\right\rangle,\dots,\left\langle Y,\nu_d\right\rangle \right)^\top:  Y\in G\right\rbrace.
\end{align*}
Observing that for all $\mathbf{Y}_1\in\mathbf{M}_1$  there exist $d\times d$ matrices $\mathbf{A}_{n1},\dots,\mathbf{A}_{nn}$ such that $ \mathbf{Y}_1=\sum_{i=1}^{n}\mathbf{A}_{ni}\mathbf{X}_i$, there also exist operators $t_{ni}$ such that $\mathbf{T}_{ni}=\mathbf{A}_{ni}$, and $\mathbf{T}_{ni}^\infty=\mathbf{0}$, which then gives $\mathbf{Y}_1\in \mathbf{M}$. Hence $\mathbf{M}_1\subseteq\mathbf{M}$. 

Now that we have introduced the notation and the setting, we are ready to compute the mean squared distance
 $E\|\wh{\mathbf{X}}_{n+h}-\wh{\mathbf{X}}^G_{n+h}\|_2^2$.

\bthe\label{theorem4.12}
Suppose $(X_n)_{n\in\mathbb{Z}}$ is a functional \ARMA$(p,q)$ process such that Assumption~\ref{farmap} holds. 
For $h\in\N$ let $\wh{X}^G_{n+h}$ be the functional best linear predictor of $X_{n+h}$ as defined in \eqref{ch4_definitionoffunctionalpredictor} and $\wh{\mathbf{X}}^G_{n+h}$ as in (\ref{ch4_gprojection}). 
Let furthermore $\wh{\mathbf{X}}_{n+h}$ be the vector best linear predictor of $\mathbf{X}_{n+h}$ based on $\mathbf{X}_1,\dots,\mathbf{X}_n$ as in \eqref{vectorBLP}. \\
(i) In the framework of Proposition~\ref{prop_pred_lin}, and if $\sum_{l=1}^{\infty} \sqrt{\lambda_l}<\infty$,  for all $d\in\N$, 
\begin{equation*}
E\left\|\wh{\mathbf{X}}_{n+h}-\wh{\mathbf{X}}^G_{n+h}\right\|_2^2\leq 4 \,\big(\sum_{i=1}^{n} \|g_{ni}^{(h)}\|_\call \big)^{2} \, \big(\sum_{l=d+1}^{\infty} \sqrt{\lambda_l}  \big)^{2} <\infty.
\end{equation*}
(ii) In the framework of Proposition~\ref{prop_pred_hil}, for all $d\in\N$,
\begin{equation*}
E\left\|\wh{\mathbf{X}}_{n+h}-\wh{\mathbf{X}}^G_{n+h}\right\|_2^2\leq 4 \, \big(\sum_{i=1}^{n}\big(\sum_{l=d+1}^{\infty}\|g_{ni}^{(h)}\nu_l\|^2\big)^{\frac{1}{2}}\big)^2\sum_{l=d+1}^{\infty}\lambda_l<\infty.
\end{equation*}
In both cases, $E\|\wh{\mathbf{X}}_{n+h}-\wh{\mathbf{X}}^G_{n+h}\|_2^2$ tends to 0 as $d\to\infty.$ 
\ethe

We start with a technical lemma, which we need for the proof of the above Theorem.

\ble\label{strong_orthogonal}
Suppose $(X_n)_{n\in\Z}$ is a  stationary and causal functional \ARMA$(p,q)$ process and $(\nu_l)_{l\in\N}$ are the eigenfunctions of its covariance operator $C_X$.  
Then for all $ j,l\in\N$,
\begin{equation*}
E\left[\left\langle X_{n+h}-\wh{X}^G_{n+h},\nu_l\right\rangle\left\langle Y,\nu_j\right\rangle\right]=0,\quad Y\in G.
\end{equation*}
\ele

\begin{proof}
For all $j,l\in\N$ we set $s_{l,j}(\cdot):=\langle \cdot,\nu_l\rangle\nu_j.$ 
First note that for all $x\in H$  with $\Vert x \Vert \leq 1$,  
\begin{align*}
\Vert s_{l,j}x\Vert =\|\left\langle x,\nu_l\right\rangle\nu_j\|\leq \|x\|\leq 1,
\end{align*}
hence, $s_{l,j}\in\call$.  
Since $G$ is an $\call$-closed subspace, $Y\in G$ implies $s_{l,j}(Y)\in G$ and we get with Remark~\ref{ch4_projector}(i) for all $j,l\in\N$,
\begin{align*}
E \left\langle X_{n+h}-\wh{X}^G_{n+h},s_{l,j}Y\right\rangle 
=E\left[\left\langle X_{n+h}-\wh{X}^G_{n+h},\nu_l\right\rangle\left\langle Y,\nu_j\right\rangle\right]
=0.
\end{align*}
\end{proof}

\noindent
{\em Proof of Theorem~\ref{theorem4.12}.  }
First note that under both conditions $(i)$ and $(ii)$, there exist $g_{ni}^{(h)}\in\call$ such that $\wh{X}^G_{n+h}=\sum_{i=1}^n g_{ni}^{(h)} X_{n+h-i}$ and that $\cals\subset\call$.   
With the matrix representation of $\wh{\mathbf{X}}^G_{n+h}$ in \eqref{ch4_bigmatrix-g} and Lemma~\ref{strong_orthogonal} we obtain
\begin{align}\label{4.80}
\lefteqn{\sum_{j=1}^{d}E\Big[\langle Y,\nu_j\rangle\big\langle X_{n+h}-\wh{X}^G_{n+h},\nu_j\big\rangle\Big]
=E\Big\langle \mathbf{Y}, 
\mathbf{X}_{n+h}-\wh{\mathbf{X}}^G_{n+h}\Big\rangle_{\R^d}}\nonumber\\
& = E\Big\langle\mathbf{Y},\mathbf{X}_{n+h}- \sum_{i=1}^{n}\mathbf{G}_{ni}^{(h)} \mathbf{X}_i-\sum_{i=1}^{n}\mathbf{G}^{(h)\infty}_{ni}\mathbf{X}_i^\infty\Big\rangle_{\R^d}
=0,\quad  Y\in G,
\end{align}
where $\mathbf{Y}$ is defined as in \eqref{4.68}.
Since \eqref{4.80} holds for all $\mathbf{Y}\in\mathbf{M}$ and $\mathbf{M}_1\subseteq\mathbf{M}$, it especially holds for all $\mathbf{Y}_1\in\mathbf{M}_1$; i.e.,
\begin{equation}\label{ch4_set-y2=0}
E\Big\langle \mathbf{Y}_1,\mathbf{X}_{n+h }-\sum_{i=1}^{n}\mathbf{G}_{ni}^{(h)}\mathbf{X}_i -\sum_{i=1}^{n}\mathbf{G}^{(h)\infty}_{ni}\mathbf{X}_i^\infty\Big\rangle_{\R^d}=0,\quad\mathbf{Y}_1\in\mathbf{M}_1.
\end{equation}
Combining (\ref{ch4_set-y2=0}) and Remark~\ref{ch4_projector}(i), we get
\begin{equation}\label{ch4_last3}
E\Big\langle \mathbf{Y}_1,\wh{\mathbf{X}}_{n+h}-\sum_{i=1}^{n}\mathbf{G}_{ni}^{(h)}\mathbf{X}_i\Big\rangle_{\R^d} =E\Big\langle \mathbf{Y}_1,\sum_{i=1}^{n}\mathbf{G}^{(h)\infty}_{ni}\mathbf{X}_i^\infty\Big\rangle_{\R^d},\quad\mathbf{Y}_1\in\mathbf{M}_1.
\end{equation}
Since both $\wh{\mathbf{X}}_{n+h}$ and $\sum\limits_{i=1}^{n}\mathbf{G}_{ni}^{(h)}\mathbf{X}_i$ are in $\mathbf{M}_1$, (\ref{ch4_last3}) especially holds, when  
\begin{equation}\label{4.84}
\mathbf{Y}_1=\wh{\mathbf{X}}_{n+h}-\sum_{i=1}^{n}\mathbf{G}_{ni}^{(h)}\mathbf{X}_i\in\mathbf{M}.
\end{equation}
We plug $\mathbf{Y}_1$ as defined in \eqref{4.84} into \eqref{ch4_last3} and obtain
\begin{align}\label{ch4_lastinsert}
E\Big\langle \wh{\mathbf{X}}_{n+h}&-\sum\limits_{i=1}^{n}\mathbf{G}_{ni}^{(h)}\mathbf{X}_i, \wh{\mathbf{X}}_{n+h}-\sum_{i=1}^{n}\mathbf{G}_{ni}^{(h)}\mathbf{X}_i\Big\rangle_{\R^d}\notag\\ &= E\Big\langle\wh{\mathbf{X}}_{n+h}-\sum\limits_{i=1}^{n}\mathbf{G}_{ni}^{(h)}\mathbf{X}_i,\sum_{i=1}^{n}\mathbf{G}^{(h)\infty}_{ni}\mathbf{X}_i^\infty\Big\rangle_{\R^d}.
\end{align}
From the left  hand side of \eqref{ch4_lastinsert} we read off
\begin{equation}\label{ch4_last2}
E\Big\langle\wh{\mathbf{X}}_{n+h}-\sum\limits_{i=1}^{n}\mathbf{G}_{ni}^{(h)}\mathbf{X}_i, \wh{\mathbf{X}}_{n+h}-\sum_{i=1}^{n}\mathbf{G}_{ni}^{(h)}\mathbf{X}_i\Big\rangle_{\R^d} 
=E\Big\|\wh{\mathbf{X}}_{n+h}-\sum_{i=1}^{n}\mathbf{G}_{ni}^{(h)}\mathbf{X}_i\Big\|_2^2,
\end{equation}
and for the right-hand side of (\ref{ch4_lastinsert}), applying the Cauchy-Schwarz inequality twice, we get
\begin{align}\label{ch4_last1}
E\Big\langle  \wh{\mathbf{X}}_{n+h}-&\sum\limits_{i=1}^{n}\mathbf{G}_{ni}^{(h)}\mathbf{X}_i, \sum_{i=1}^{n}\mathbf{G}^{(h)\infty}_{ni}\mathbf{X}_i^\infty\Big\rangle_{\R^d}\notag\\
&\leq E\Big[\Big\|\wh{\mathbf{X}}_{n+h} -\sum\limits_{i=1}^{n} \mathbf{G}_{ni}^{(h)}\mathbf{X}_i\Big\|_2 \, \Big\|\sum_{i=1}^{n}\mathbf{G}^{(h)\infty}_{ni}\mathbf{X}_i^\infty\Big\|_2\Big]\nonumber\\
&\leq \Big(E\Big\|\wh{\mathbf{X}}_{n+h}-\sum\limits_{i=1}^{n}\mathbf{G}_{ni}^{(h)}\mathbf{X}_i\Big\|^2_2\Big)^{\frac{1}{2}} \, \Big(E\Big\|\sum_{i=1}^{n}\mathbf{G}^{(h)\infty}_{ni}\mathbf{X}_i^\infty\Big\|^2_2\Big)^{\frac{1}{2}}.
\end{align}
Dividing the right-hand side of (\ref{ch4_last2}) by the first square root on the right-hand side of (\ref{ch4_last1}) we find
\begin{equation*}
E\Big\|\wh{\mathbf{X}}_{n+h}-\sum\limits_{i=1}^{n}\mathbf{G}_{ni}^{(h)}\mathbf{X}_i\Big\|_2^2\leq E\Big\|\sum_{i=1}^{n}\mathbf{G}^{(h)\infty}_{ni}\mathbf{X}_i^\infty\Big\|^2_2.
\end{equation*}
Hence, for the mean squared distance we obtain
\begin{align*}
E\Big\|\wh{\mathbf{X}}_{n+h}-\wh{\mathbf{X}}^G_{n+h}\Big\|_2^2&=E\Big\|\wh{\mathbf{X}}_{n+h}-\sum\limits_{i=1}^{n}\mathbf{G}_{ni}^{(h)}\mathbf{X}_i- \sum_{i=1}^{n}\mathbf{G}^{(h)\infty}_{ni}\mathbf{X}_i^\infty\Big\|_2^2\nonumber\\
&\leq 2E\Big\|\wh{\mathbf{X}}_{n+h}-\sum\limits_{i=1}^{n}\mathbf{G}_{ni}^{(h)}\mathbf{X}_i\Big\|_2^2+2E\Big\|\sum_{i=1}^{n}\mathbf{G}^{(h)\infty}_{ni}\mathbf{X}_i^\infty\Big\|^2_2\nonumber\\
&\leq 4E\Big\|\sum_{i=1}^{n}\mathbf{G}^{(h)\infty}_{ni}\mathbf{X}_i^\infty\Big\|^2_2.
\end{align*}
What remains to do is to bound $\sum\limits_{i=1}^{n}\mathbf{G}_{ni}^{(h)\infty}\mathbf{X}_i^{\infty}$, which, by \eqref{ch4_bigmatrix-g}, is a $d$-dimensional vector with $l$-th component $\sum_{i=1}^{n}\sum\limits_{l'=d+1}^{\infty}\langle X_i,\nu_{l'} \rangle \langle g_{ni}^{(h)}\nu_l',\nu_l\rangle.$

$(i)$ First we consider the framework of Proposition~\ref{prop_pred_lin}. \\
We abbreviate $x_{i,l'}:=\langle X_{i},\nu_l'\rangle$ and calculate
\begin{align}\label{ch4_bigcal}
E\Big\|\sum_{i=1}^{n}\mathbf{G}_{ni}^{(h)\infty}\mathbf{X}_i^{\infty}\Big\|_2^2 
&= E\Big\|\sum_{l=1}^{d}\Big(\sum_{i=1}^{n} \sum_{l'=d+1}^{\infty}x_{i,l'}\langle g_{ni}^{(h)}\nu_l',\nu_l\rangle\Big)\nu_l\Big\|^2\nonumber\\
&\leq E\Big\|\sum_{l=1}^{\infty}\Big(\sum_{i=1}^{n}\sum_{l'=d+1}^{\infty}x_{i,l'}\langle g_{ni}^{(h)}\nu_l',\nu_l\rangle\Big)\nu_l\Big\|^2\nonumber\\
&{=}E\Big\|\sum_{i=1}^{n}\sum_{l'=d+1}^{\infty}x_{i,l'}g_{ni}^{(h)}\nu_l'\Big\|^2 
\end{align}
by Parseval's equality \eqref{parseval}. 
Then we proceed using the orthogonality of $\nu_l$ and the Cauchy-Schwarz inequality,
\begin{align}
&=E\Big[\Big\langle \sum_{i=1}^{n}\sum_{l=d+1}^{\infty}x_{i,l}g_{ni}^{(h)}\nu_l,\sum_{j=1}^{n}\sum_{l'=d+1}^{\infty}x_{j,l'}g_{nj}^{(h)}\nu_l' \Big\rangle \Big]\nonumber\\
&= \sum_{i,j=1}^{n}\sum_{l,l'=d+1}^{\infty}E(x_{i,l}x_{j,l'})\langle g_{ni}^{(h)}\nu_l,g_{nj}^{(h)}\nu_{l'}\rangle \notag\\
&\leq\Big(\sum_{i=1}^{n}\sum_{l=d+1}^{\infty}\sqrt{E(x_{i,l})^2}\|g_{ni}^{(h)}\nu_l\| \Big)^2\notag\\
&=\Big(\sum_{i=1}^{n}\sum_{l=d+1}^{\infty}\sqrt{\lambda_l}\|g_{ni}^{(h)}\nu_l\| \Big)^2,\label{last}
\end{align}
since $E\langle X_i,\nu_l \rangle ^2 = \la_l$ by \eqref{ch2_lamdaandvariance}. 
The right-hand side of \eqref{last} is bounded above by
\begin{align}
 \Big(\sum_{i=1}^{n}\sum_{l=d+1}^{\infty} \sqrt{\lambda_l} \|g_{ni}^{(h)}\|_\call \|\nu_l\|\Big) ^2\nonumber  
 = \Big(\sum_{i=1}^{n} \|g_{ni}^{(h)}\|_\call \Big)^{2} \, \Big(\sum_{l=d+1}^{\infty} \sqrt{\lambda_l}   \Big)^{2}, \nonumber
\end{align}
since $\|\nu_l\| =1$ for all $l\in\N$.
Since $g_{ni}^{(h)}\in\call$, we have $\sum_{i=1}^{n} \|g_{ni}^{(h)}\|_\call < \infty$ for all $n\in\N$ and with
 $\sum_{l=1}^{\infty} \sqrt{\lambda_l} <\infty$, the right-hand side tends to 0 as $d\to\infty$. 
 
$(ii)$ In the framework of Proposition~\ref{prop_pred_hil}  there exist $g_{ni}^{(h)}\in\cals$ such that $\wh{X}^G_{n+h}=\sum_{i=1}^n g_{ni}^{(h)} X_{n+h-i}$. 
By the Cauchy-Schwarz inequality we estimate
\begin{align}\label{4.90}
E\Big\|\sum_{i=1}^{n}\mathbf{G}_{ni}^{(h)\infty}\mathbf{X}_i^{\infty}\Big\|_2^2 &\leq \Big(\sum_{i=1}^{n}\sum_{l=d+1}^{\infty}\sqrt{\lambda_l}\|g_{ni}^{(h)}\nu_l\| \Big)^2\nonumber\\
 &\leq \Big(\sum_{i=1}^{n}\Big(\sum_{l=d+1}^{\infty}
\lambda_l\Big)^{\frac{1}{2}}\Big(\sum_{l=d+1}^{\infty}\|g_{ni}^{(h)}\nu_l\|^2\Big)^{\frac{1}{2}}\Big)^2\notag \\
&=\Big(\sum_{i=1}^{n}\Big(\sum_{l=d+1}^{\infty}\|g_{ni}^{(h)}\nu_l\|^2\Big)^{\frac{1}{2}}\Big)^2\sum_{l=d+1}^{\infty}\lambda_l.
\end{align}
Now note that $\sum_{l=d+1}^{\infty}\|g_{ni}^{(h)}\nu_l\|^2\leq \|g_{ni}^{(h)}\|_{\mathcal{S}}<\infty.$
Thus, \eqref{4.90} is bounded by
\begin{align*}
\Big(\sum_{i=1}^{n}\Big(\sum_{l=d+1}^{\infty}\|g_{ni}^{(h)}\nu_l\|^2\Big)^{1/2}\Big)^2\sum_{l=d+1}^{\infty}
\lambda_l\leq \Big(\sum_{i=1}^{n}\|g_{ni}^{(h)}\|^{1/2}_{\mathcal{S}}\Big)^2\sum_{l=d+1}^{\infty} \lambda_l<\infty,
\end{align*}
such that \eqref{4.90} tends to $0$ as $d\to\infty$.
\halmos

We are now ready to derive bounds of the mean squared prediction error 
\mbox{$E\|X_{n+h}-\wh{X}_{n+h}\|^2$}. 
 
\bthe\label{theorembound}
Consider a stationary and causal functional \ARMA$(p,q)$ process as in \eqref{ch4_FARMA}.
Then, for $h\in\N$, $\wh{X}_{n+h}$ as defined in \eqref{KLtrunc}, and $\sigma_{n,h}^2$ as defined in \eqref{ch4_msefblp}, we obtain
\begin{equation*}
E\left\|X_{n+h}-\wh{X}_{n+h}\right\|^2\leq \sigma_{n,h}^2+\gamma_{d;n;h},
\end{equation*}
where $\gamma_{d;n;h}$ can be specified as follows.\\[2mm]
(i) In the framework of Proposition~\ref{prop_pred_lin}, and if $\sum_{l=1}^{\infty} \sqrt{\lambda_l}<\infty$,  for all $d\in\N$, 
\begin{align*}
&\gamma_{d;n;h}=4 \,\Big(\sum_{i=1}^{n} \|g_{ni}^{(h)}\|_\call \Big)^{2} \, \Big(\sum_{l=d+1}^\infty \sqrt{\lambda_l}  \Big)^{2}+\sum_{l=d+1}^\infty \lambda_l.
\end{align*}
(ii) In the framework of Proposition~\ref{prop_pred_hil}, for all $d\in\N$,
\begin{align*}
&\gamma_{d;n;h}=  \sum_{l=d+1}^\infty\lambda_l \, \big(4\, g_{n;d;h} + 1\big)\quad\mbox{with}\quad g_{n;d;h}=\sum_{i=1}^n\Big(\sum_{l=d+1}^\infty\left\|g_{ni}^{(h)}\nu_l\right\|^2\Big)^{1/2}\leq \sum_{i=1}^n \|g_{ni}^{(h)}\|^2_\mathcal{S}.
\end{align*} 
In both cases, $E\left\|X_{n+h}-\wh{X}_{n+h}\right\|^2$ tends to $\sigma_{n,h}^2$ as $d\to\infty.$ 
\ethe

\bproof\label{ch4_othogonol}
{ 
With \eqref{ch2_lamdaandvariance} and since $(\nu_l)_{l\in\N}$ is an ONB, we get}
\begin{align}
E\left\|X_{n+h}-\wh{X}_{n+h}\right\|^2 &= E\Big\Vert\sum_{l=1}^d \langle X_{n+h}-\wh{X}_{n+h},\nu_l\rangle \nu_l + 
\sum_{l=d+1}^\infty\langle X_{n+h},\nu_l\rangle \nu_l \Big\Vert^2\notag\\
&=\sum_{l=1}^d  E\left\Vert\langle X_{n+h}-\wh{X}_{n+h},\nu_l\rangle \nu_l\right\Vert^2 +  \sum_{l=d+1}^\infty E\left\Vert \langle X_{n+h},\nu_l\rangle \nu_l \right\Vert^2\notag\\
&=\sum_{l=1}^d  E\langle X_{n+h}-\wh{X}_{n+h},\nu_l\rangle^2 + \sum_{l=d+1}^\infty \lambda_l. \label{proof1}
\end{align}
Now note that by definition of the Euclidean norm,
$$\sum_{l=1}^d  E\langle X_{n+h}-\wh{X}_{n+h},\nu_l\rangle^2 = E\Vert \mathbf{X}_{n+h}-\wh{\mathbf{X}}_{n+h} \Vert_2^2.$$
Furthermore, by  Definition~\ref{lcs} of $\call$-closed subspaces and Remark~\ref{ch4_projector}(i), $E\langle X_{n+h}-\wh{X}^G_{n+h},Y\rangle=0$ for all $Y\in G.$ 
Observing that $\wh{X}^G_{n+h}-\wh{X}_{n+h}\in G$, we conclude that
\begin{align*}
E\langle X_{n+h}-\wh{X}^G_{n+h},\wh{X}^G_{n+h}-\wh{X}_{n+h}\rangle=0,
\end{align*}
and, by Lemma~\ref{strong_orthogonal},
\begin{align*}
E \langle X_{n+h}-\wh{X}^G_{n+h},\nu_l\rangle
\langle\wh{X}^G_{n+h}-\wh{X}_{n+h},\nu_{l'}\rangle=0,\quad l,l'\in\N.
\end{align*}
Hence,
\begin{align}
E\Vert \mathbf{X}_{n+h}-\wh{\mathbf{X}}_{n+h} \Vert_2^2 &=E \Vert \mathbf{X}_{n+h}-\wh{\mathbf{X}}^{G}_{n+h} \Vert_2^2 + E\Vert \wh{\mathbf{X}}^{G}_{n+h}-\wh{\mathbf{X}}_{n+h} \Vert_2^2, \label{proof2}
\end{align}
where for the first term of the right-hand side, 
\begin{align}
E \Vert \mathbf{X}_{n+h}-\wh{\mathbf{X}}^{G}_{n+h} \Vert_2^2 &= E \sum_{l=1}^d \langle X_{n+h}-\wh{X}^{G}_{n+h},\nu_l\rangle^2 \leq \sum_{l=1}^{\infty} \langle X_{n+h}-\wh{X}^{G}_{n+h},\nu_l\rangle^2 \notag\\ &= E \Vert X_{n+h}-\wh{X}^G_{n+h}\Vert^2=\sigma_{n,h}^2 \label{proof3},
\end{align} 
and the last equality holds by Remark~\ref{ch4_projector}(iii).
For the second term of the right-hand side of \eqref{proof2} we use Theorem~\ref{theorem4.12}.
We finish the proof of both (i) and (ii) by plugging  \eqref{proof2} and \eqref{proof3} into \eqref{proof1}.
\eproof

{Since the prediction error decreases with $d$, Theorem~\ref{theorembound} can not be applied as a criterion for the choice of $d$. 
In a data analysis, when quantities such as covariance operators and its eigenvalues have to be estimated, the variance of the estimators increases with $d$. 
Small errors in the estimation of small empirical eigenvalues may have severe consequences on the prediction error (see \citet{bernard}). 
{ To avoid this problem  a conservative choice of $d$ is suggested.}
Theorem~\ref{theorembound} allows for an interpretation of the prediction error for fixed $d$. This is similar as for Theorem~3.2 in \citet{aue}, here for a more general model class of ARMA models. }

\section{Traffic data analysis}\label{s6}

\begin{figure}[t]
\begin{center}
\includegraphics[trim=1.2cm 1.5cm 1.05cm 1.5cm, clip=true,scale=0.67]{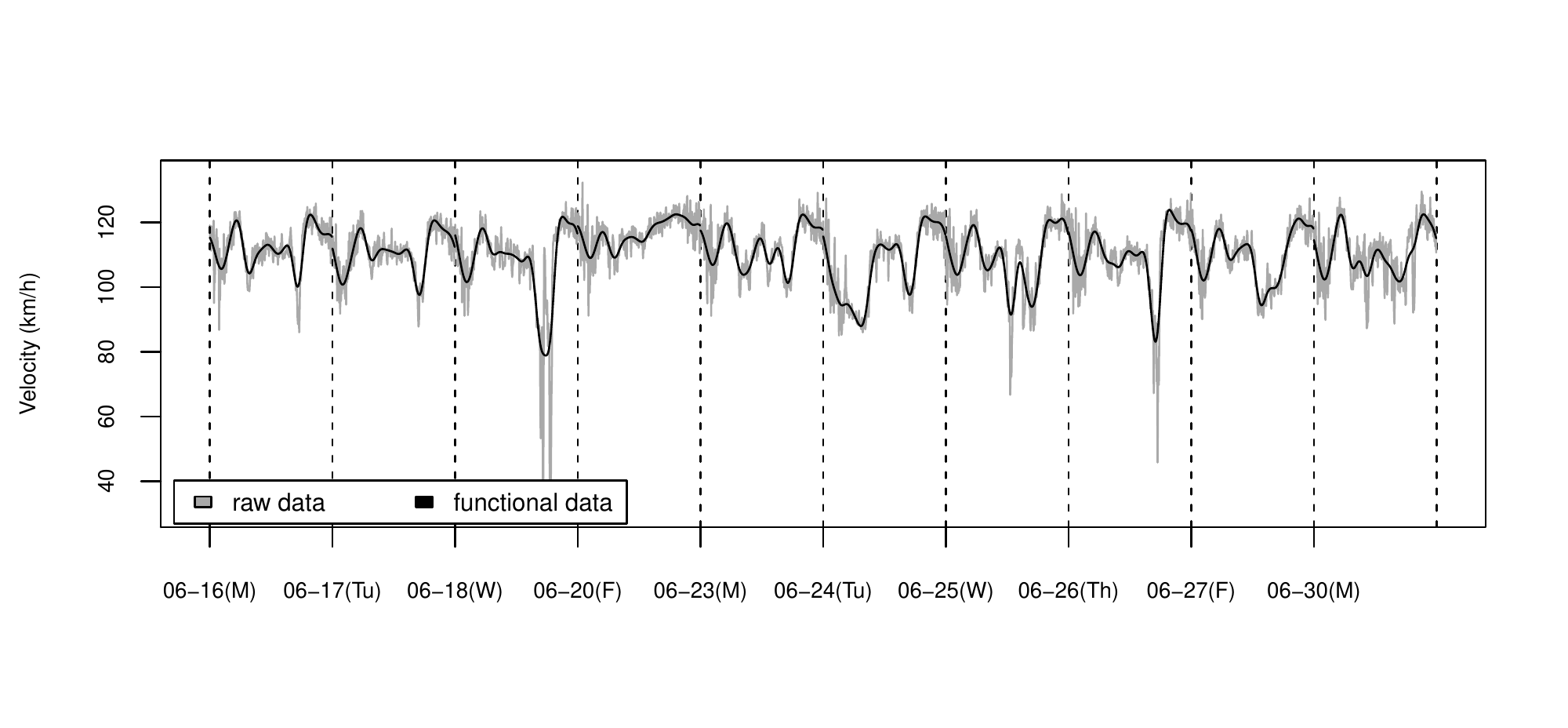}
  \caption{Functional velocity data (black) and raw data (grey) both in km/h on the last ten working days in June 2014 (June 19th 2014 was a catholic holiday).}
  \label{raw_fd_last10days_workingdays}
\end{center}
\end{figure}
We apply the functional time series prediction theory of Section~4 to highway traffic data provided by the Autobahndirektion S\"udbayern, thus extending previous work in \citet{bessecardot}.
Our dataset consists of measurements at a fixed point on a highway (A92) in Southern Bavaria, Germany.
Recorded is the average velocity per minute from 1/1/2014 00:00 to 30/06/2014 23:59 on three lanes.
After taking care of missing values and data outliers, we average the velocity per minute over the three lanes, weighted by the number of vehicles per lane.
Then we transform the cleaned daily high-dimensional data to functional data using the first $30$ Fourier basis functions. 
The two standard bases of function spaces used in FDA are Fourier and B-spline basis functions (see \citet{ramsay1}, Section~3.3). 
We choose Fourier basis functions as they allow for a more parsimonious representation of the variability: a Fourier representation needs only 4 FPCs to explain 80\% of the variability in the data,  whereas a B-spline representation requires 6 (see \citet{taoran}, Section~6.1).
In Figure~\ref{raw_fd_last10days_workingdays} we depict the outcome on the working days of two weeks in June 2014.
More information on the transformation from discrete time observation to functional data and details on the implementation in \texttt{R} are provided in \cite{taoran}, Chapter~6.

\begin{figure}[t]
\begin{center}
\includegraphics[trim=1.2cm 1.6cm 1cm 2cm, clip=true,scale=0.67]{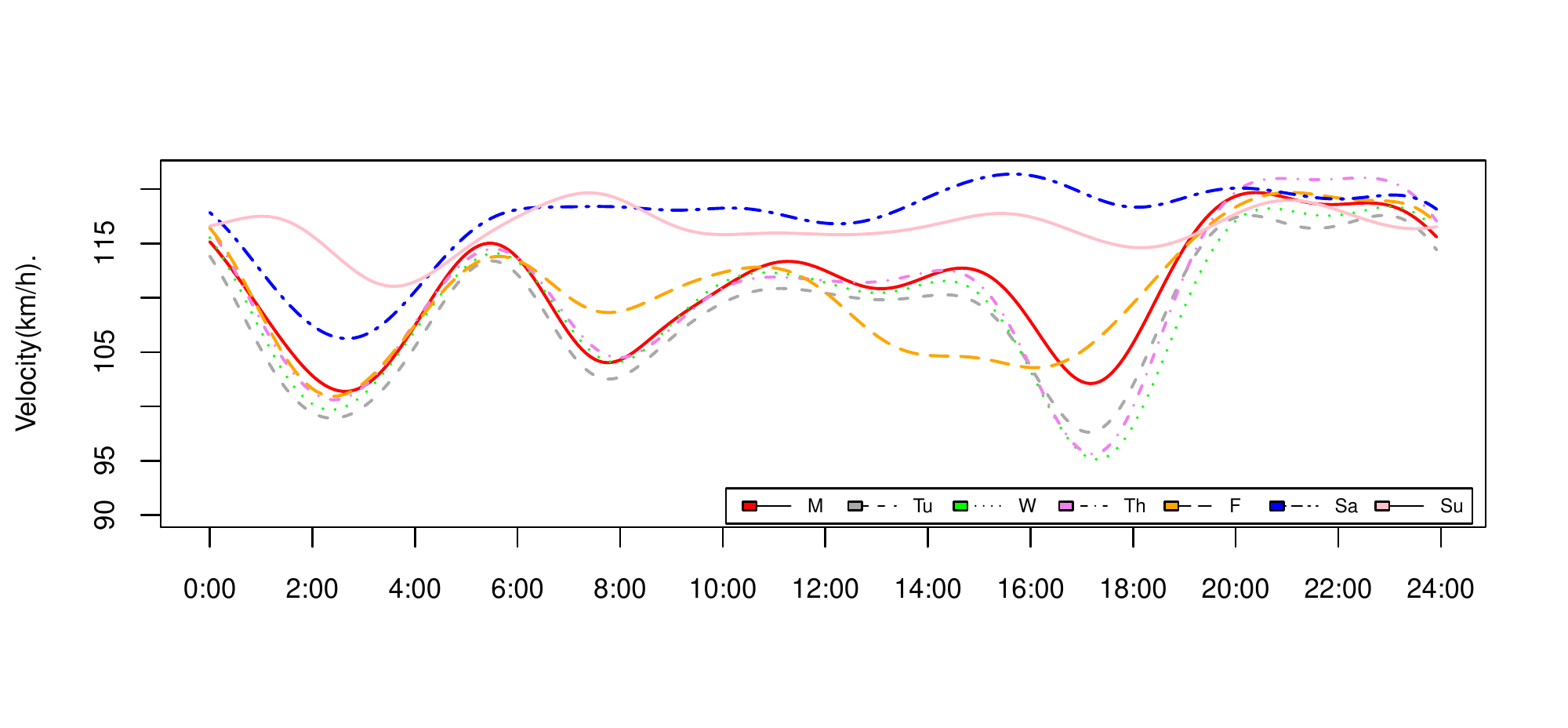}
\caption{Empirical functional mean velocity {(in km/h) on the 7 days of the week, over the day}}
\label{weekday_mean}
\end{center}
\end{figure}

\begin{figure}[t]
\begin{center}
\includegraphics[trim=1.2cm 1.6cm 1cm 2cm, clip=true,scale=0.72]{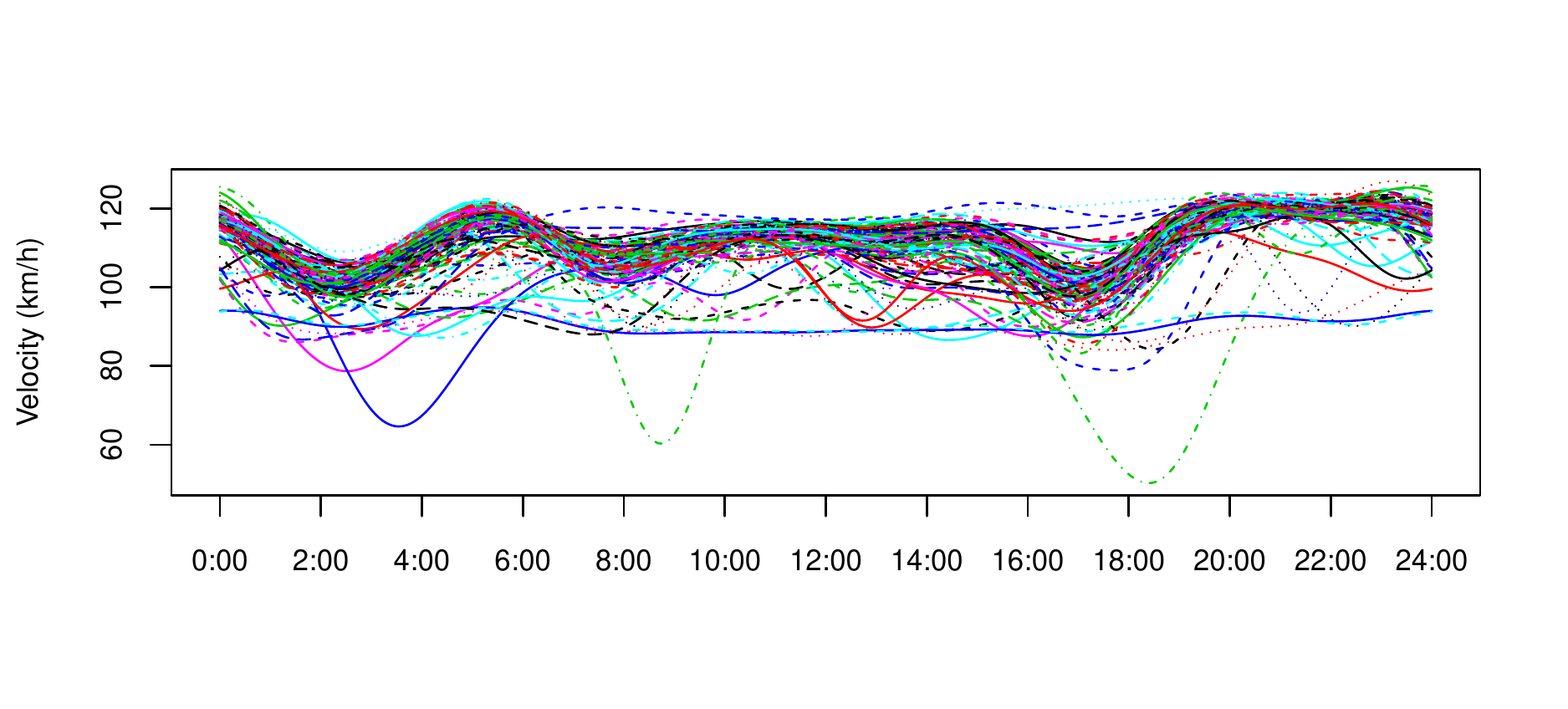}\\
\includegraphics[trim=1.2cm 1.5cm 1cm 2cm, clip=true,scale=0.72]{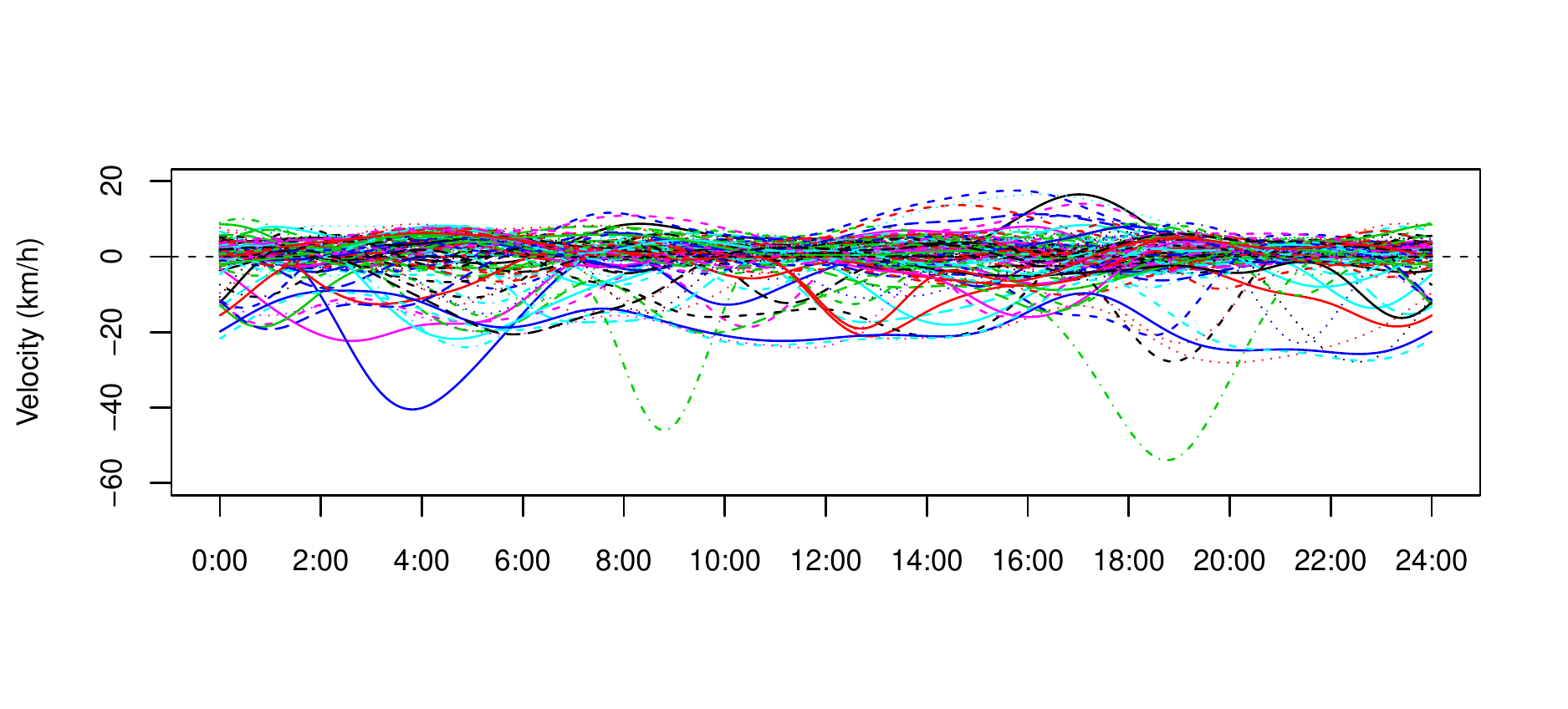}
  \caption{Functional velocity data (in km/h) over the day for 30 working days smoothed by a Fourier basis before and after subtracting the weekday mean}
  \label{Meansubstraction_indiv}
\end{center}
\end{figure}

As can be seen in Figure~\ref{weekday_mean}, different weekdays have different mean velocity functions. 
To account for the difference between weekdays we subtract the empirical individual daily mean from all daily data (Monday mean from Monday data, etc.). The effect is clearly visible in Figure~\ref{Meansubstraction_indiv}. 
However, even after deduction of the daily mean, the functional stationarity test of \citet{rice} based on projection rejects stationarity of the time series. This is due to traffic flow on weekends: Saturday and Sunday traffic show different patterns than weekdays, even after mean correction.
Consequently, we restrict our investigation to working days (Monday-Friday), resulting in a functional time series $X_n$ for $n=1,\ldots,N=119$, for which the stationarity test suggested in \cite{rice} does not reject the stationarity assumption.

A portmanteau test of \citet{gabrys} applied to $X_n$ for $n=1,\dots,N=119$ working days rejects (with a $p$-value as small as $10^{-6}$) that the daily functional data are uncorrelated. 
{The assumption of temporal dependence in the data is in line with the results in \citet{chrobok} who use linear models to predict inner city traffic flow, and with results 
in \citet{bessecardot} who use the temporal dependence for the prediction of traffic volume with a functional AR$(1)$ model}.

Next we show the prediction method at work for our data. 
More precisely, we estimate the $1$-step predictors for the last $10$ working days of our dataset and present the final result in Figure~\ref{functionalpredictor_workingdays}, where we compare the functional velocity data with their 1-step predictor.
We  explain the procedure in detail.

We start by estimating the covariance operator (recall Remark~\ref{consistent}). Figure~\ref{ch2_figcovkernel} shows the empirical covariance kernel of the  functional traffic velocity data based on $119$ working days (the empirical version of $E[(X(t)-\mu(t))(X(s) - \mu(s))]$ for $0\leq t,s \leq 1$). 

\begin{figure}[t]
\begin{center}
\includegraphics[trim=3cm 2cm 0.4cm 2.5cm, clip=true,scale=0.6]{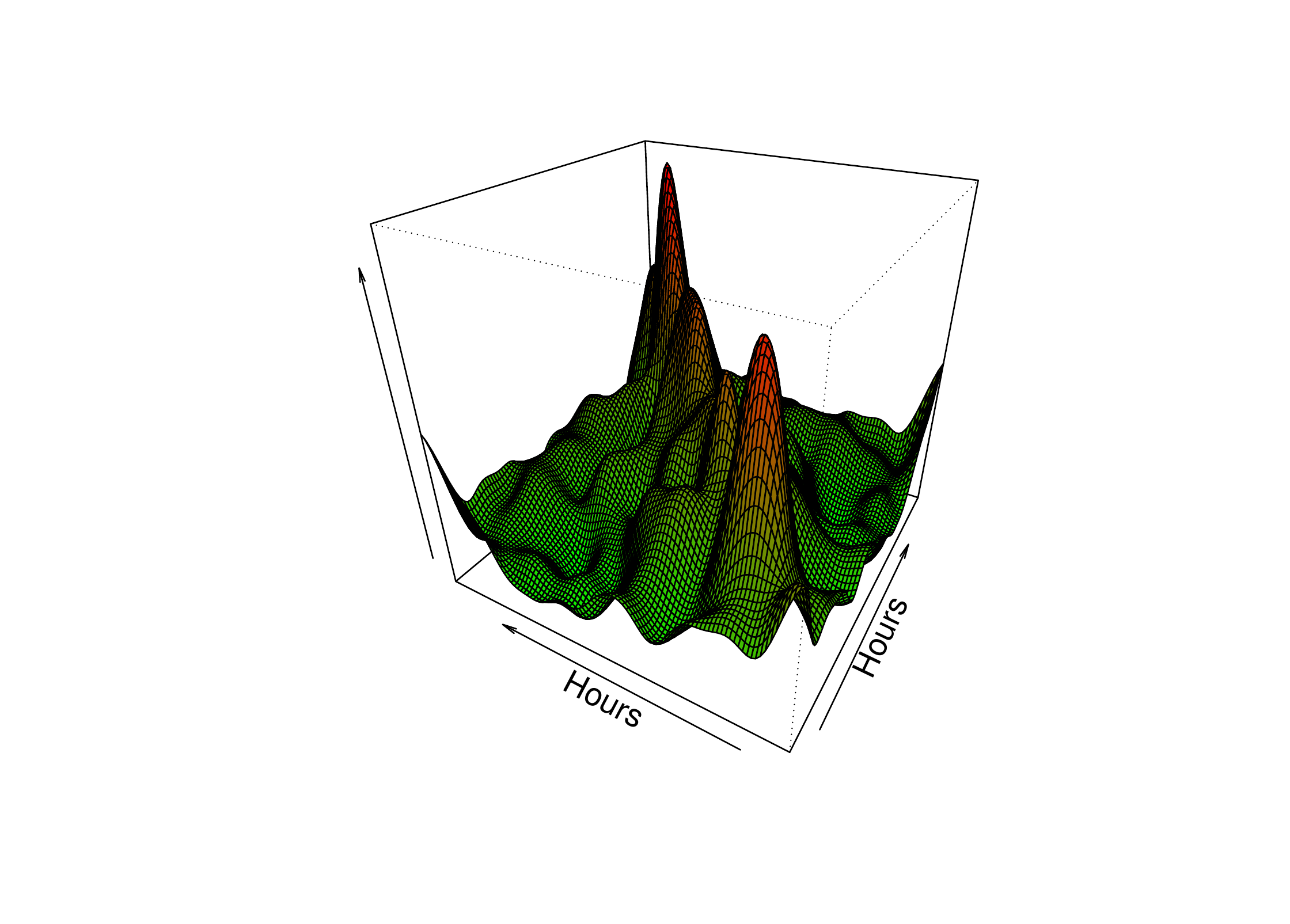}
  \caption{Empirical covariance kernel of functional velocity data on $119$ working days.}
  \label{ch2_figcovkernel}
  \end{center}
\end{figure}

As indicated by the arrows, the point $(t,s) =( 0,0)$ is at the
bottom right corner and estimates the variance at midnight. 
The empirical variance over the day is represented along the diagonal from the bottom right to the top left corner. 
The valleys and peaks along the diagonal represent phases of low and high traffic density: for instance, the first peak represents the variance at around
05:00 a.m., where traffic becomes denser, since commuting to work starts. 
Peaks away from the diagonal represent high dependencies between different time points during the day. 
For instance, high traffic density in the early morning correlates with high traffic density in the late afternoon, again due to commuting.

\begin{figure}[t]
\begin{center}
 \includegraphics[trim=1.2cm 1.5cm 0.8cm 2cm, clip=true,scale=0.66]{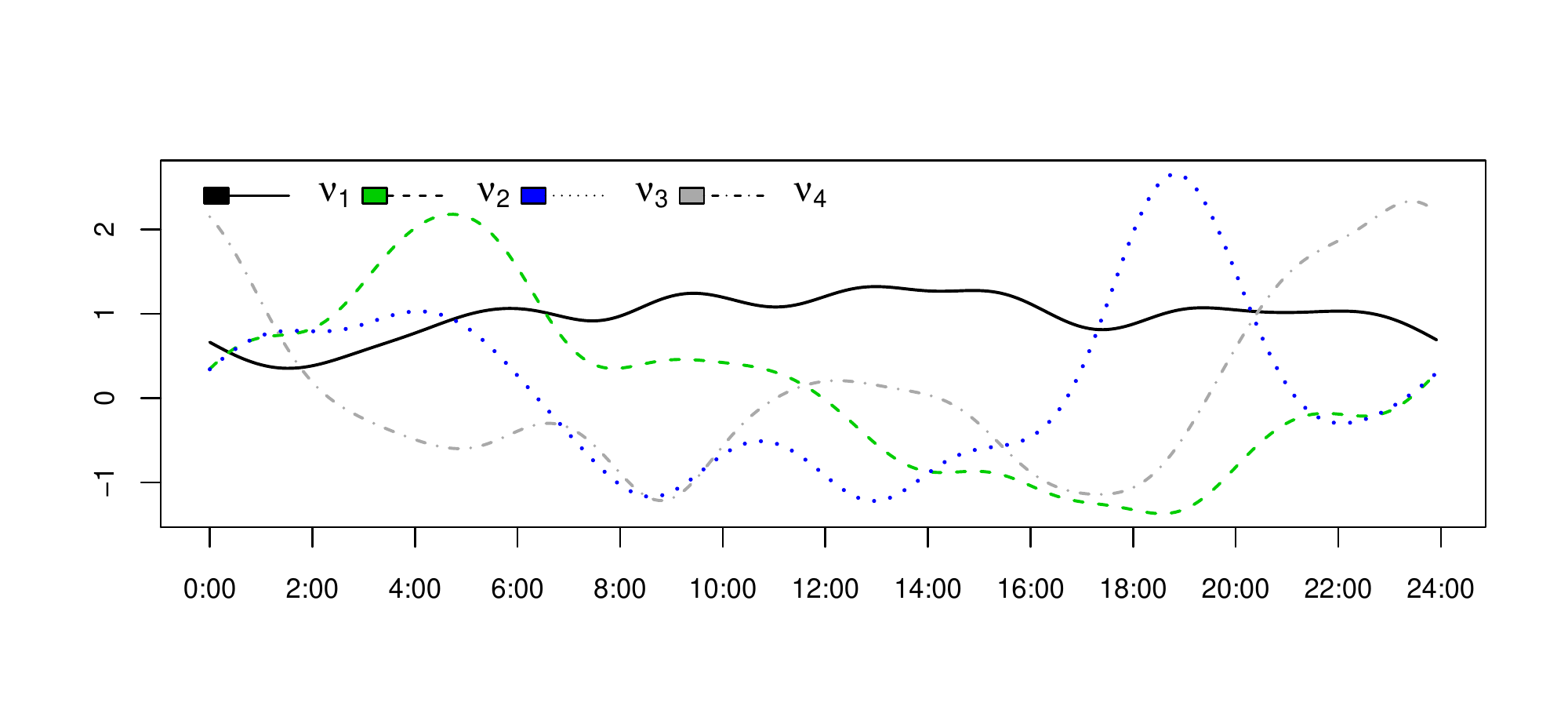}
  \caption{Four empirical eigenfunctions of the $N=119$ working days functional velocity data. 
  The criterion is 80\%; i.e., $\nu_1,\nu_2,\nu_3,\nu_4$ explain together 80\% of the total data variability.
  \label{fig_eigenfunctions}}
  \end{center}
\end{figure}

Next we compute the empirical eigenpairs $(\lambda_j^e,\nu_j^e)$ for $j=1,\ldots,N$ of the empirical covariance operator. The first four eigenfunctions are depicted in Figure~\ref{fig_eigenfunctions}.

Now we apply the \CPV\ method from Remark~\ref{CVP} to the  functional highway velocity data. 
From a ``$\CPV(d)$ vs. $d$'' plot we read off that $d=4$ FPCs explain 80\% of the variability of the data. 

Obviously, the choice of $d$ is critical. Choosing $d$ too small induces a loss of information as seen in Theorem~4.13. 
Choosing $d$ too large makes the estimation of the vector model difficult and may result in imprecise predictors: the prediction error may explode (see \citet{bernard}). As a remedy we perform cross validation on the prediction error based on a different number $d$ of relevant scores. 
This furthermore ensures that the dependence structure of the data is not ignored when it is relevant for prediction.

Since the prediction is not only based on the number of scores, but also on the chosen ARMA model, we perform cross validation on the number of scores in combination with cross validation on the orders of the ARMA models.

Thus, we apply the Algorithm~1 of Section~4.1 to the functional velocity data and implement the following steps for $d=2,\dots,6$ and $N=119$.

(1) For each day $n\in\{1,\dots,N\}$, truncate the Karhunen-Lo\`eve representation (Theorem~\ref{theorem2.3}) of the daily functional velocity curve $X_n$ at $d$. 
This yields
$$X_{n,d}:=\sum\limits_{j=1}^{d}\left\langle X_n,\nu^e_j\right\rangle\nu^e_j, \quad n=1,\dots,N.$$ 
(Figure~\ref{ch1_rawdatavsfd} depicts the (centered) functional velocity data and the corresponding truncated data for $d=4$.)
Store the $d$ scores in the vector $\mathbf{X}_n$, 
$$\bfx_n=(\left<X_n,\nu^e_1\right>,\dots,\left<X_n,\nu^e_d\right>)^\top, \quad n=1,\dots,N.$$

\begin{figure}[t]
\begin{center}
\includegraphics[trim=1.2cm 1.6cm 0.4cm 2cm, clip=true,scale=0.645]{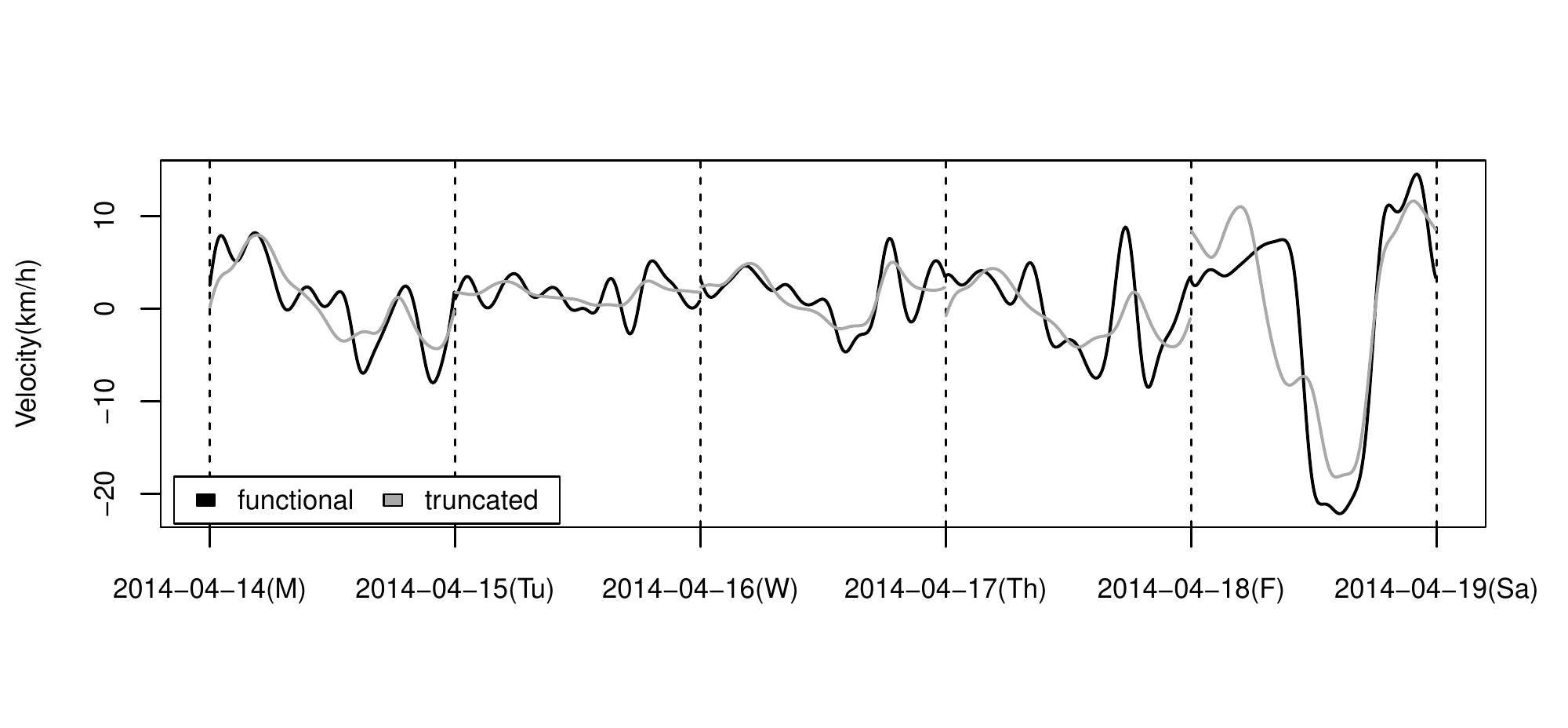}
 \caption{Functional velocity raw data on 5 consecutive working days (black) versus the truncated data by the Karhunen-Lo\`eve representation (grey). The criterion is 80\% and the number $d$ of FPCs is 4.}
 \label{ch1_rawdatavsfd}
 \end{center}
\end{figure}

(2) {Fit different vector ARMA$(p,q)$ models to the $d$-dimensional score vector. 
Compute the best linear predictor $\wh\bfx_{n+1}$ based on the vector model iteratively by  the Durbin-Levinson or the Innovations Algorithm (see e.g \citet{brockwell}). 

(3) Re-transform the vector best linear predictor $\wh\bfx_{n+1}$ into its functional form $\wh X_{n+1}$. 
Compare the goodness of fit of the models by their functional prediction errors $\Vert X_{n+1} - \wh X_{n+1}\Vert^2$. (In
Table~\ref{ch6_predictionworkingdays} root mean squared errors (RMSE) and mean absolute errors (MAE)
for the different models are summarized.) 
Fix the optimal $d$ and the optimal
ARMA$(p, q)$ model.

{As a result we find minimal 1-step prediction errors for $d=4$, which confirms the choice proposed by the CPV method, and for the VAR$(2)$ and the vector MA$(1)$ model.
Both models yield the same RMSE, and the MAE of the vector MA$(1)$ model is slightly smaller than that of the vector AR$(2)$ model.} Since we opt for a parsimonious model, we choose the vector MA$(1)$ model, for which the predictor is depicted in Figure~\ref{functionalpredictor_workingdays}.}

\begin{table}[t]
\centering
\begin{small}
\begin{tabular}{c c c  c  c c c c c  }
  \hline                  
& $(p,q)$ & $(1,0)$& $(2,0)$& $(0,1)$& $(0,2)$ & $(1,1)$ & $(2,1)$ & $(1,2)$ \\ \hline
d=2& RMSE & 5.15 &	5.09 & 5.02 &	5.15 &	5.13 &	4.96 &	5.09 \\
   & MAE  & 3.82 &  3.77 & 3.73 &	3.83 &	3.80 &	3.66 &	3.76 \\ \hline
d=3& RMSE & 4.97 &	4.87 & 4.86 &	5.30 &	4.94 &	4.89 &	5.08 \\
   & MAE  & 3.70 &	3.62 & 3.61	&   3.87 &	3.68 &	3.63 &	3.69 \\ \hline
d=4& RMSE & 4.98 &	\textbf{4.83} &	\textbf{4.83}&	5.55 & 4.92 & 4.90 & 5.23 \\
   & MAE  & 3.67 &	3.55 &	\textbf{3.54}&	4.13 &	3.62 &	3.61 &	3.83 \\ \hline
d=5& RMSE & 5.06 &	5.15 &	4.91&	5.80 &	5.04 &  5.20 &  5.46   \\
   & MAE  & 3.76 &	3.77 &	3.63&	4.38 &	3.76 &  3.80 &  4.02\\ \hline 
d=6& RMSE & 5.12 &	5.28 &	5.09&	6.47 &	5.12 &  5.34 &	5.97     \\
   & MAE  & 3.78 &	3.88 &	3.82 &	4.87 &  3.81 &	3.91 &  4.50  \\ \hline
 \end{tabular}
 \end{small}
 \caption{Average 1-step prediction errors of the predictors for the last 10 observations for all {working days} for different \ARMA\ models and number of principal components.}
 \label{ch6_predictionworkingdays}
\end{table}

\begin{figure}[t]
\begin{center}
\includegraphics[trim=1.2cm 1.6cm 1cm 2cm, clip=true,scale=0.67]{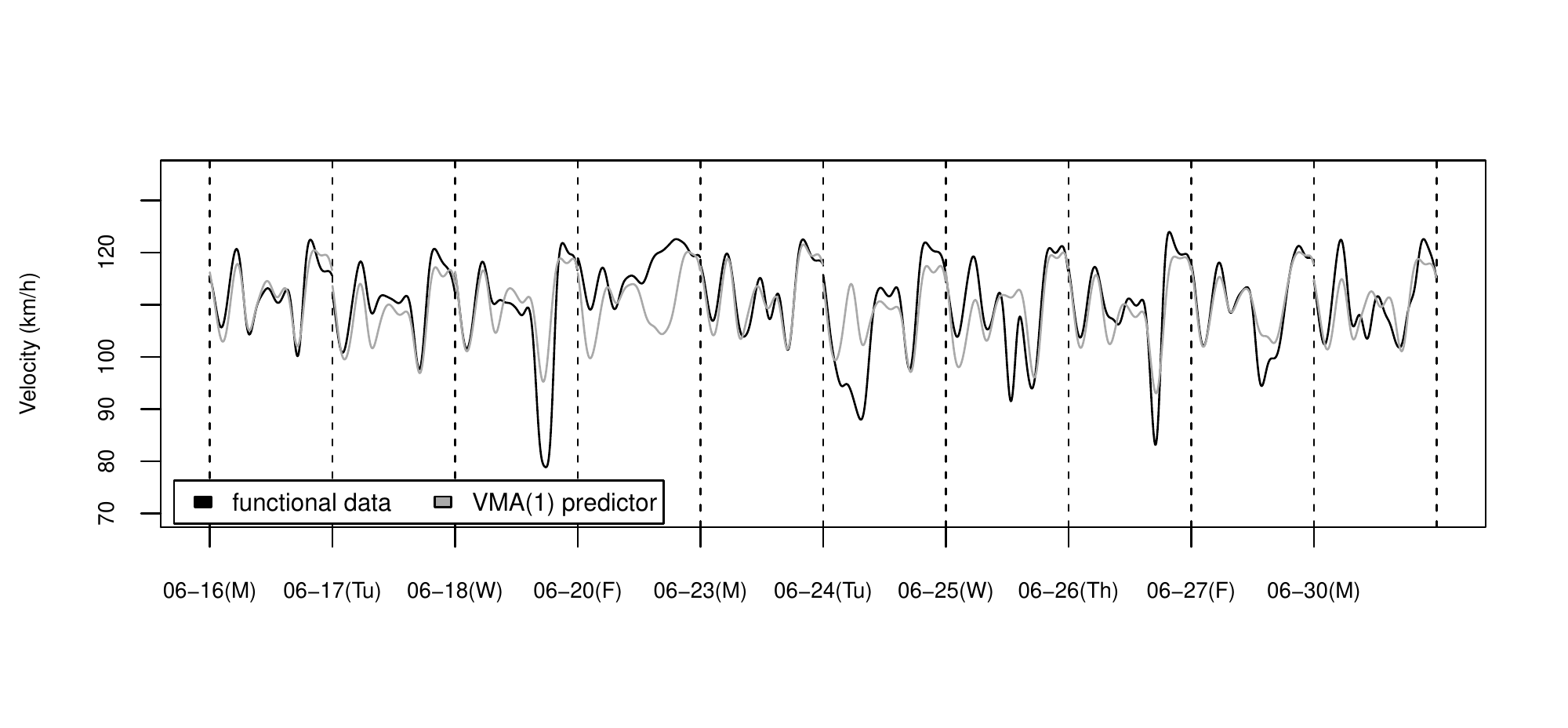}
  \caption{Functional velocity data in black and 1-step functional predictor based on VMA$(1)$ in grey (in km/h) for the last 10 working days in June 2014 }
  \label{functionalpredictor_workingdays}
  \end{center}
\end{figure}

{Finally, we compare the performance of the 1-step prediction based on the functional MA$(1)$ model with standard non-parametric prediction methods. 
Our approach definitely outperforms prediction methods like exponential smoothing, 
naive prediction with the last observation, 
or using the mean of the time series as a predictor. 
Details are given in \citet{taoran}, Section~6.3.}

\section{Conclusions}

{We have investigated functional {ARMA$(p,q)$} models and a corresponding approximating vector model, which lives on the closed linear span of the first $d$ eigenfunctions of the covariance operator. 
We have presented conditions for the existence of a unique stationary  and causal solution to both functional ARMA$(p,q)$ and approximating vector model.
Furthermore, we have derived conditions such that the approximating vector model is exact. 
Interestingly, and in contrast to AR or ARMA models, for a functional MA process of finite order the approximate vector process is automatically again a MA process of equal or smaller order.

For arbitrary $h\in\N$ we have investigated the $h$-step functional best linear predictor of \citet{bosq2014} and gave conditions for a representation in terms of operators in $\call$.
We have compared the best linear predictor of the approximating vector model with the functional best linear predictor,  and showed that the difference between the two predictors tends to $0$ if the dimension of the vector model $d\rightarrow\infty$.
The theory gives rise to a prediction methodology for stationary functional ARMA$(p,q)$ processes similar to the one introduced in \citet{aue}. 

We have applied the new prediction theory to traffic velocity data. 
For finding an appropriate dimension $d$ of the vector model, we applied
the FPC criterion and cross validation on the prediction error.
For our traffic data the cross validation leads to the same choice of $d=4$ as the FPC criterion for CPV$(d)\geq 80\%$.
The model selection is also performed via cross validation on the $1$-step prediction error for different ARMA models resulting in a MA$(1)$ model.

The appeal of the methodology is its ease of application. 
Well-known \texttt{R} software packages (\texttt{fda} and \texttt{mts}) make the implementation straightforward. 
Furthermore, the generality of dependence induced by ARMA models extends the range of application of functional time series, which was so far restricted to autoregressive dependence structures.\\ }

\FloatBarrier
\noindent\textbf{Acknowledgement:} We thank the Autobahndirektion S\"udbayern and especially J.~Gr\"otsch for their support and for providing the traffic data. J.~Klepsch furthermore acknowledges financial support from the Munich Center for Technology and Society based project ASHAD. 

\section*{References}
\bibliography{KKWbibliography}
\end{document}